%% file: SharpRec.tex
\documentclass[sigconf]{acmart}
\AtBeginDocument{%
  }

\copyrightyear{2026}
\acmYear{2026}
\setcopyright{cc}
\setcctype{by}
\acmConference[KDD '26]{Proceedings of the 32nd ACM SIGKDD Conference on Knowledge Discovery and Data Mining V.2}{August 09--13, 2026}{Jeju Island, Republic of Korea}
\acmBooktitle{Proceedings of the 32nd ACM SIGKDD Conference on Knowledge Discovery and Data Mining V.2 (KDD '26), August 09--13, 2026, Jeju Island, Republic of Korea}
\acmDOI{10.1145/3770855.3817945}
\acmISBN{979-8-4007-2259-2/2026/08}

\usepackage{enumitem}
\usepackage{amsmath}
\usepackage{subfigure}
\usepackage[table]{xcolor}
\usepackage{multirow}
\usepackage{bm}

\newcommand{\method}{\textsf{SharpRec}}

\begin{document}

\title[Sharpness-aware Model Merging with Salience Recovery for LLM-based \\ Cross-Domain Sequential Recommendation]{Sharpness-aware Model Merging with Salience Recovery for LLM-based Cross-Domain Sequential Recommendation}


\author{Huwei Ji}
\orcid{0009-0009-7273-0372}
\affiliation{%
  \institution{Zhejiang University}
  \city{Hangzhou}
  \country{China}
}
\email{jihuwei@zju.edu.cn}

\author{Jiajie Su}
\orcid{0000-0002-6899-4174}
\affiliation{%
  \institution{Zhejiang University}
  \city{Hangzhou}
  \country{China}
}
\email{sujiajie@zju.edu.cn}

\author{Yuyuan Li}
\orcid{0000-0003-4896-2885}
\affiliation{%
  \institution{Hangzhou Dianzi University}
  \city{Hangzhou}
  \country{China}
}
\email{y2li@hdu.edu.cn}

\author{Xiaohua Feng}
\orcid{0009-0001-6829-7088}
\affiliation{%
  \institution{Zhejiang University}
  \city{Hangzhou}
  \country{China}
}
\email{fengxiaohua@zju.edu.cn}

\author{Chaochao Chen}
\authornote{Corresponding Author.}
\orcid{0000-0003-1419-964X}
\affiliation{%
  \institution{Zhejiang University}
  \city{Hangzhou}
  \country{China}
}
\email{zjuccc@zju.edu.cn}

\renewcommand{\shortauthors}{Huwei Ji, Jiajie Su, Yuyuan Li, Xiaohua Feng, \& Chaochao Chen}

\begin{abstract}
LLM-based Cross-Domain Sequential Recommendation (CDSR) leverages LLMs to enhance target performance via deep semantic reasoning, alleviating the dependency on overlapping users. Among LLM-based paradigms, model merging is particularly promising for multi-domain scenarios due to its superior scalability and flexibility in integrating diverse knowledge sources.
However, our empirical investigations reveal two critical bottlenecks: (1) cross-domain knowledge conflict; 
and (2) performance saturation in multi-domain fusion.
Our analysis attributes these phenomena to parameter-level misalignment and statistical homogenization during the merging process. 
To address these bottlenecks, we propose SharpRec, Sharpness-aware Model Merging with Salience Recovery for LLM-based CDSR, a framework designed to lift the performance upper bound of merged models. 
SharpRec incorporates two synergistic modules: Sharpness-aware Geometric Alignment to establish a stable geometric foundation for interference-free fusion; and Preference Salience Activation to effectively recover the distinctive features essential for bolstering target domain performance.
Extensive experiments in both dual-domain and multi-domain scenarios demonstrate that SharpRec consistently outperforms state-of-the-art baselines.
\end{abstract}


\begin{CCSXML}
<ccs2012>
   <concept>
       <concept_id>10002951.10003317.10003347.10003350</concept_id>
       <concept_desc>Information systems~Recommender systems</concept_desc>
       <concept_significance>500</concept_significance>
       </concept>
 </ccs2012>
\end{CCSXML}

\ccsdesc[500]{Information systems~Recommender systems}


\keywords{Cross-Domain Sequential Recommendation, Large Language Models, Model Merging}


\maketitle

\input{body/1-intro}

\input{body/2-re}
\input{body/3-pre}

\input{body/4-met}
\input{body/5-exp}
\input{body/6-con}


\begin{acks}
This work was supported in part by the National Natural Science Foundation of China (No.~62522217, No.~62402148) and the Zhejiang Provincial Natural Science Foundation of China (No.~LZYQ25F020002).
\end{acks}

\bibliographystyle{ACM-Reference-Format}
\balance
\bibliography{sample-base}

\appendix

\input{body/app}

\end{document}

%% file: body/1-intro.tex
\section{Introduction}
\begin{figure*}[t] 
    \centering

    \subfigure[Performance Disparity]{\includegraphics[width=0.245\linewidth]{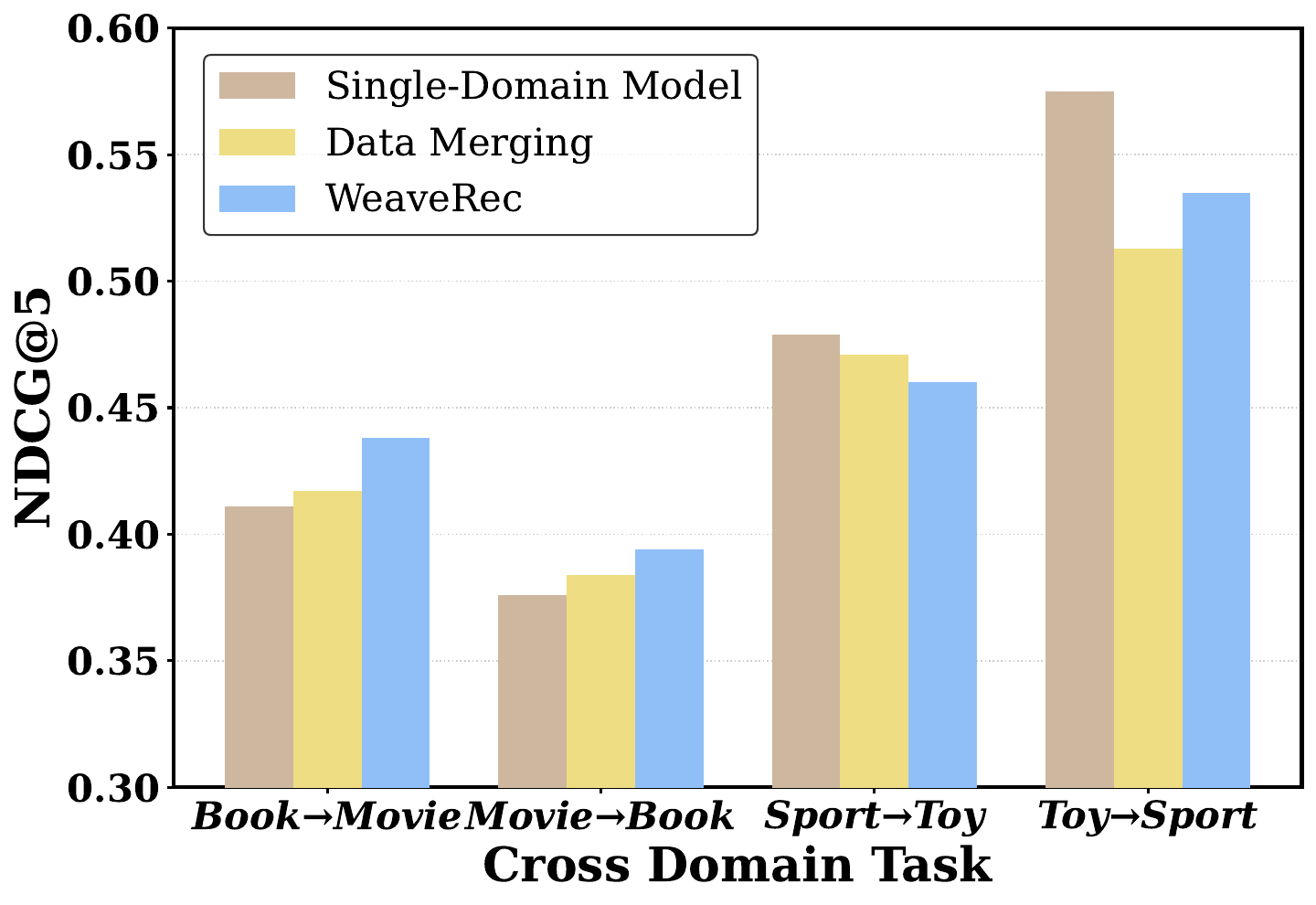}}
    \hfill %
    \subfigure[Weight Tuning Failure]{\includegraphics[width=0.245\linewidth]{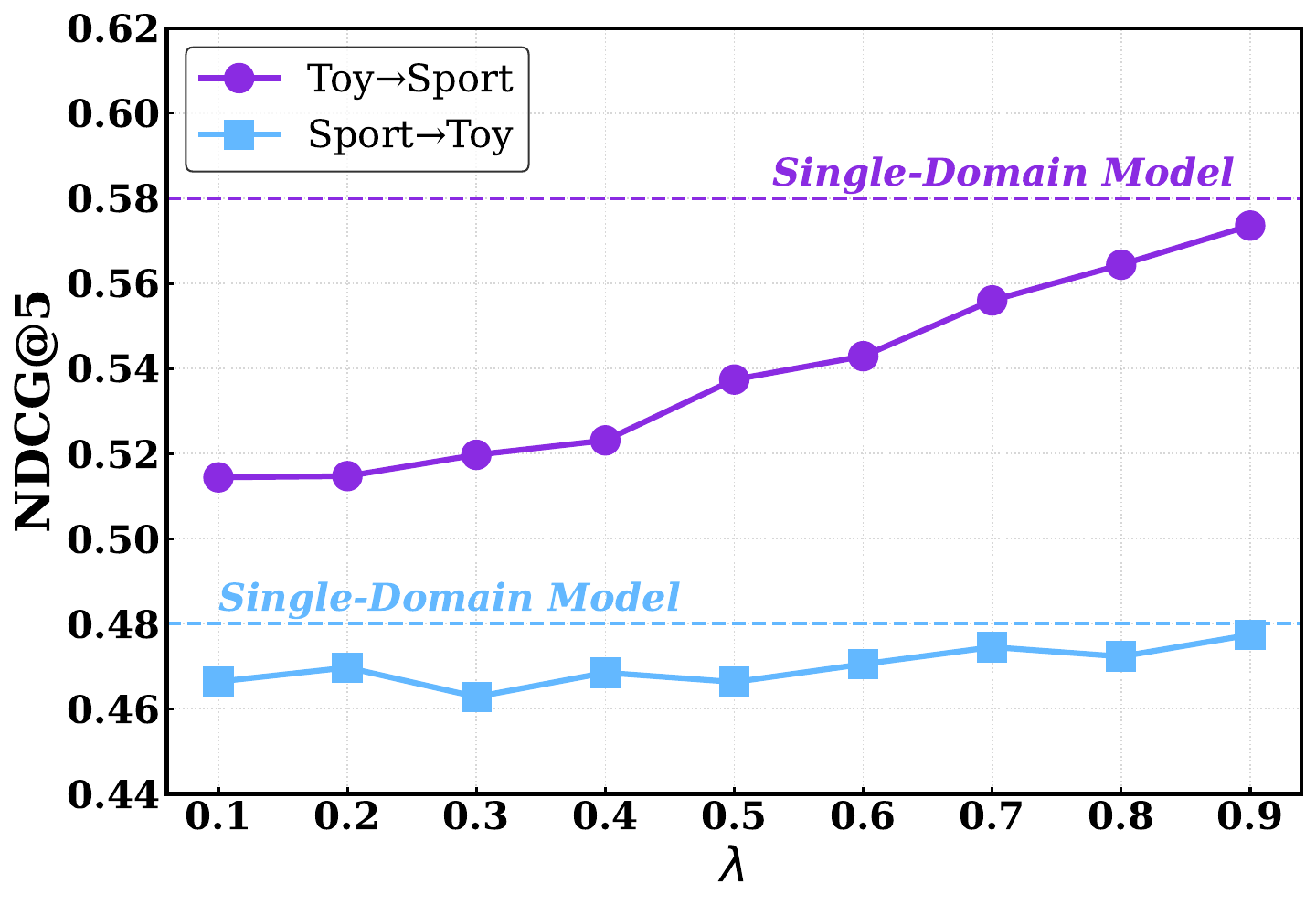}}
    \hfill
    \subfigure[Performance Saturation]{\includegraphics[width=0.245\linewidth]{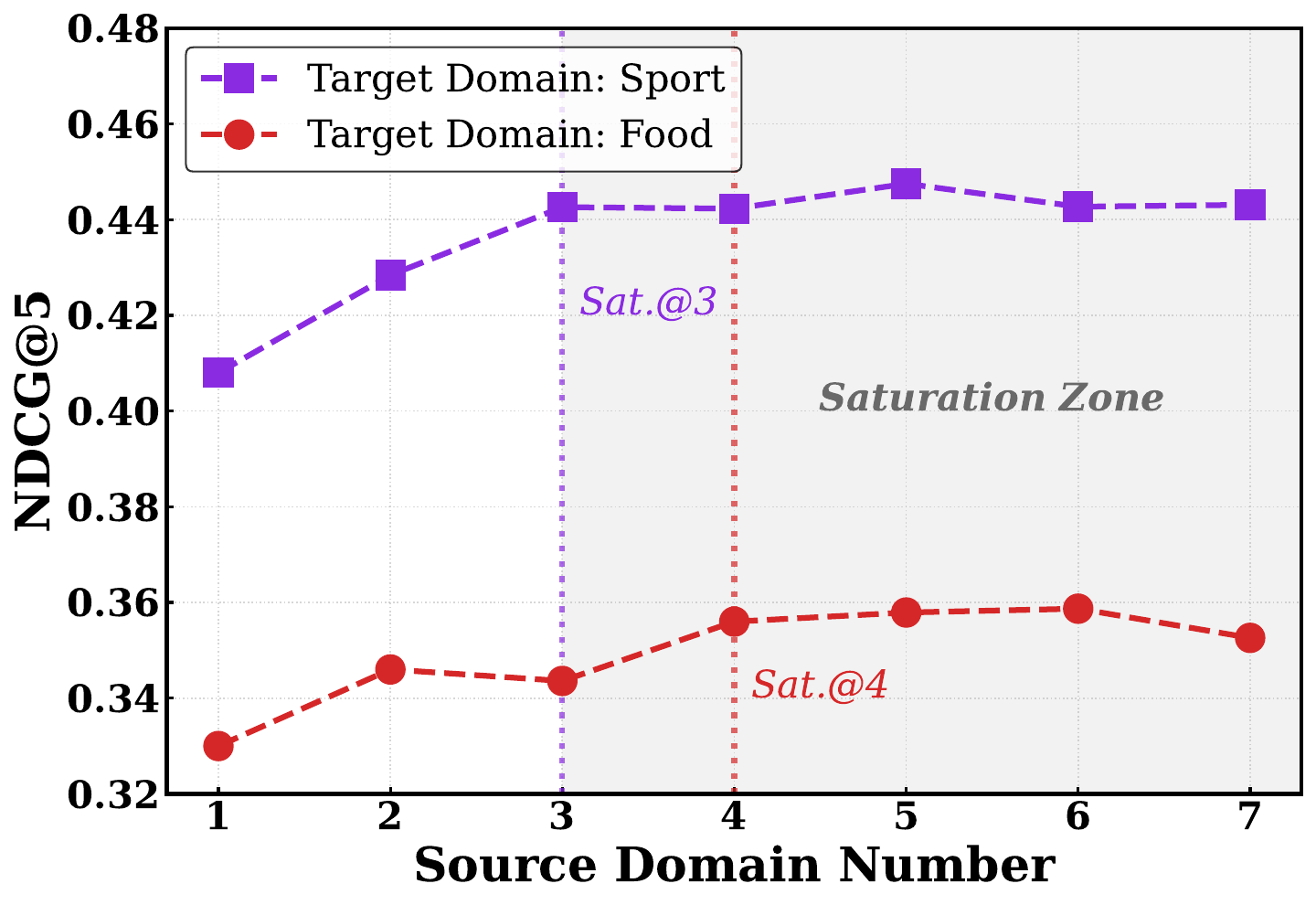}}
    \hfill %
    \subfigure[Statistical Homogenization]{\includegraphics[width=0.245\linewidth]{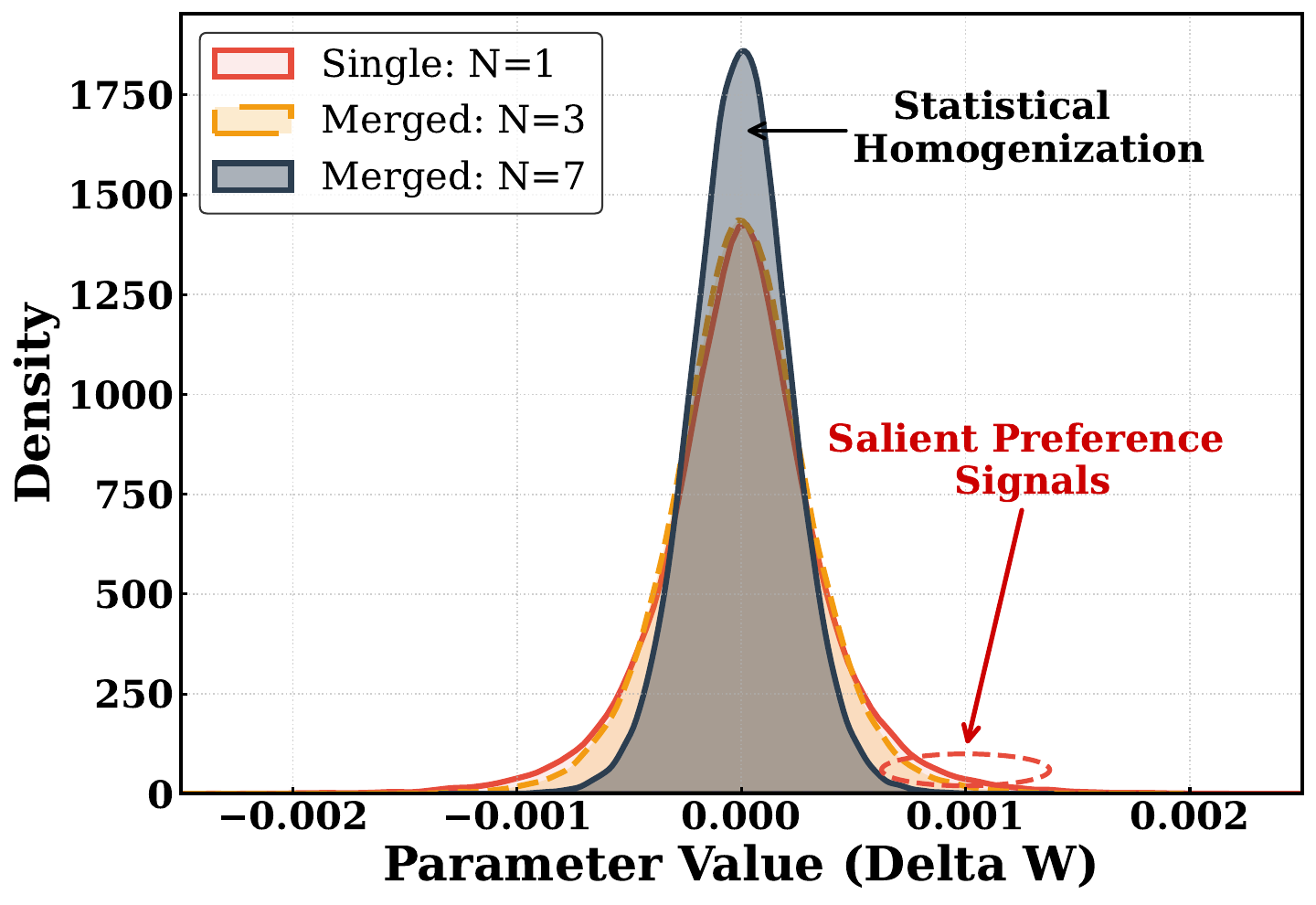}}


    \caption{Empirical evidence of negative transfer and scalability bottlenecks in CDSR. (a) \& (b) show that merging heterogeneous domains causes negative transfer which cannot be resolved by weight tuning. (c) \& (d) reveal that multi-domain fusion leads to performance saturation due to statistical parameter homogenization.}
    \label{fig:intro}
\end{figure*}
Cross-Domain Sequential Recommendation (CDSR) has attracted significant attention for its capability to enhance recommendation performance in the target domain by transferring knowledge from rich auxiliary domain interactions.
Most early CDSR approaches~\citep{ma2019pi, cao2022contrastive} operate under the strict assumption of fully overlapping users, relying on them to align cross-domain feature spaces. 
But real-world applications typically present a \textit{hybrid user scenario}, which consists of a small fraction of overlapping users and a vast majority of non-overlapping users active in only a single domain.
%
%
%
To tackle the hybrid user scenario, recent studies~\citep{xu2024rethinking, xu2024towards, li2024cross, feng2026generalized} have employed advanced mechanisms, such as heterogeneous graphs or meta-learning, to mine latent cross-domain correlations. 
A key limitation is that these approaches predominantly rely on ID-based representations that lack deep semantic understanding, making them ineffective at bridging domain gaps for non-overlapping users where explicit co-occurrence signals are scarce.

To bridge this semantic gap, emerging research integrates LLMs to empower CDSR with rich open-world knowledge. 
By leveraging their robust reasoning capabilities, LLMs can infer implicit item correlations and uncover shared cross-domain features, effectively aligning domains for non-overlapping users even without interaction links.
%
%
Current research integrating LLMs into CDSR can be broadly categorized into \textit{LLM-enhanced} approaches~\citep{liu2025bridge, shen2024exploring, xin2025llmcdsr, wang2025lecdsr} and \textit{LLM-based} approaches~\citep{cui2022m6, geng2022recommendation, liu2025uncovering, peng2024ecellm, tang2025one}.
LLM-enhanced methods focus on aligning cross-domain feature distributions by projecting items into a unified semantic space via adapters or contrastive learning. 
However, this paradigm suffers from semantic-collaborative misalignment, where forced spatial mapping leads to information distortion and negative transfer in dynamic preference modeling.
LLM-based approaches aggregate multi-domain recommendation data into unified instruction-tuning datasets to train a single comprehensive model under a one model paradigm.
%
While effective, this method often leads to data distribution conflicts across domains in real-world scenarios, and lacks flexibility, as domain updates necessitate costly full-model retraining.
%
%

To overcome these inherent limitations, \textit{Model Merging}~\citep{yang2408model, pmlr-v267-tan25f} has emerged as a more advantageous paradigm.
By directly integrating domain-specific models (e.g., LoRA adapters) in the parameter space, such approaches enable cross-domain knowledge transfer in a flexible and efficient manner, without enforcing representation-space alignment or requiring retraining.
%
%
For instance, X-Cross~\citep{hadad2025x} employs a dynamic integration mechanism to linearly merge multiple expert models at the layer level, while WeaveRec~\citep{hou2025weaverec} applies a linear weight average to merge LoRA parameters. 
Despite the promise of model merging in enabling flexible knowledge transfer, its direct application in realistic CDSR scenarios is constrained by two bottlenecks, as revealed by our preliminary investigation (detailed in Sec. ~\ref{sec:motivation}).
\begin{itemize}[leftmargin=*]\setlength{\itemsep}{-\itemsep}
\item \textbf{B1: Cross-Domain Knowledge Conflict.}
Merging heterogeneous domains inevitably triggers \textit{parameter interference}, leading to severe cross-domain knowledge conflict.
As illustrated in Figure~\ref{fig:intro}(a), while compatible pairs (e.g., \textit{Book-Movie}) show gains, merging divergent domains (e.g., \textit{Sport-Toy}) yields performance inferior to the single-domain baseline.
Figure~\ref{fig:intro}(b) further confirms that even through exhaustive optimization on the linear merging weight $\lambda$, the cross-domain performance still fails to surpass the single-domain model. 
Our empirical analysis (detailed in Sec. ~\ref{sec:motivation}, Figure~\ref{fig:para_vis}) finds that this failure stems from geometric incompatibility, where independent fine-tuning traps models in conflicting sharp minima.

\item \textbf{B2: Rapid Saturation in Multi-domain Fusion.}
Scaling up source domains triggers \textit{statistical parameter homogenization}, as linear aggregation dilutes salient coefficients, causing rapid performance saturation.
As illustrated in Figure~\ref{fig:intro}(c), performance gains on targets like \textit{Sport} and \textit{Food} exhibit diminishing returns, hitting a distinct plateau after integrating 3 or 4 source domains. 
Figure~\ref{fig:intro}(d) reveals that this saturation stems from linear aggregation functioning as a mean filter, which systematically smooths out distributional outliers into a generic, feature-poor Gaussian distribution.
Consequently, this erosion of parameter saliency destroys the high-order signals essential for capturing subtle cross-domain preference correlations in CDSR tasks.

%
\end{itemize}
Taken together, these observations suggest that the effectiveness of existing model merging methods in CDSR is fundamentally constrained by the \textbf{geometric parameter interference and statistical homogenization of preference signals.}

To address the fundamental bottlenecks of geometric incompatibility and statistical homogenization, we propose \method{}, a novel framework designed to break the scalability ceiling of LLM-based CDSR. 
The core insight of \method{} lies in the synergy between geometric alignment for interference-free knowledge fusion and distributional reshaping for salient signal reactivation.
Specifically, our framework consists of two collaborative modules: 
(1) \textit{Sharpness-aware Geometric Alignment (SGA)}. 
To eliminate the root cause of parameter interference (\textbf{B1}), we introduce a sharpness-aware tuning mechanism during the domain-specific fine-tuning. 
By guiding models toward flat minima rather than sharp ones, SGA ensures that diverse domain models reside in geometrically connected low-loss basins, thereby securing a stable geometric foundation for the seamless transfer of user preference structures. 
(2) \textit{Preference Salience Activation (PSA)}. 
To overcome the performance ceiling imposed by statistical homogenization (\textbf{B2}), we propose a post-fusion non-linear reparameterization strategy.
Instead of converging to the mediocre Gaussian distribution induced by linear aggregation, PSA reconstructs the merged parameters to restore heavy-tailed characteristics.
By reactivating the salient preference signals eroded during fusion, this mechanism effectively captures high-order user heterogeneity, amplifying the model's capacity to bolster target domain performance.
%

The main contributions are summarized: 
(1) We propose a novel and efficient LLM-based CDSR framework \method{} which introduces a new paradigm by integrating sharpness-aware optimization and distributional recovery for robust and scalable knowledge transfer. 
(2) To address the challenges, we propose: i) \textbf{SGA} guides models into flat minima to resolve geometric incompatibility, thereby mitigating cross-domain parameter interference; ii) \textbf{PSA} reactivates salient preference signals through non-linear reparameterization, lifting the performance upper bound and alleviating the saturation effect in multi-domain fusion. 
(3) Extensive experiments in both dual-domain and multi-domain
scenarios comprehensively validate the effectiveness and scalability of \method{}.

%% file: body/2-re.tex
\section{Related Works}

\subsection{Cross-Domain Sequential Recommendation}
Cross-Domain Sequential Recommendation (CDSR)~\citep{chen2024survey} enhances target-domain performance by exploiting historical interactions across multiple related domains.
By integrating cross-domain recommendation~\citep{gao2023autotransfer, zhang2024m3oe, wang2023plate} with sequential modeling~\citep{liu2023diffusion, li2023strec, zheng2025collabedit}, CDSR frameworks must capture temporal interest evolution while effectively transferring cross-domain preferences.
Early studies~\citep{ma2019pi, cao2022contrastive, xu-etal-2025-videoeraser} primarily facilitate knowledge transfer via overlapping users or items.
For instance, $\pi$-Net~\citep{ma2019pi} and PSJNet~\citep{sun2021parallel} employ RNNs or attention mechanisms to map source-domain representations into target models.
Subsequent research incorporates dual attention~\citep{li2021dual, huang2024troika}, parallel architectures~\citep{ma2024triple, tananimate}, and multi-interest modeling~\citep{ma2022mixed, HABIT, du2021cert} to enrich cross-domain sequential patterns.
To capture finer-grained dependencies, recent methods utilize Graph Neural Networks (GNNs)~\citep{guo2021gcn, xu2025multi, ma2022mixed} and contrastive learning~\citep{zang2023contrastive, xu2024towards,hou2024cross} to model complex interaction structures and cross-domain relationships.
Furthermore, CDSR has expanded to handle realistic constraints, such as partial user overlap~\citep{xu2024towards, li2024cross, fu2025objectrelator} and multi-domain scenarios~\citep{xu2024rethinking, 11494811, gong2020hamming}, using auxiliary behaviors and meta-learning~\citep{li2024cross, TEMA, tan2025mimir} to alleviate data sparsity and enhance scalability.
Despite these advances, most existing CDSR methods remain constrained by limited collaborative signals and shallow representations, failing to capture the complex, dynamic interest transfer inherent in multi-domain environments.

\subsection{LLM for Cross-Domain Sequential Recommendation}
The emergence of Large Language Models (LLMs) has introduced strong semantic reasoning to CDSR. 
Recent studies leverage LLMs to alleviate data sparsity and enhance preference modeling without relying on explicit overlap.
M6-Rec~\citep{cui2022m6} supports open-ended domains by unifying multi-domain data within a generative pretraining framework.
LLM-Rec~\citep{tang2025one} treats cross-domain recommendation as a language modeling task to directly capture user behavior sequences.
URLLM~\citep{shen2024exploring} integrates user retrieval signals into LLMs to improve performance under sparse interaction settings.
LLMCDSR~\citep{xin2025llmcdsr} constructs unified semantic item representations and user profiles to bridge domains.
LeCDSR~\citep{wang2025lecdsr} fuses LLM-generated semantic embeddings with ID-based representations to enhance sequential modeling.
LLM4CDSR~\citep{liu2025bridge} employs hierarchical profiling to capture item relations and global preferences, addressing overlap scarcity.
X-Cross~\citep{hadad2025x} utilizes parameter-efficient fine-tuning and layer-wise integration of domain-specific LLMs for scalable adaptation.
WeaveRec~\citep{hou2025weaverec} investigates model merging by weaving domain-specific LoRA adapters to stabilize knowledge transfer and maintain efficiency.

Despite this progress, fundamental challenges remain: most methods rely on shallow semantic fusion or direct knowledge aggregation without explicitly modeling preference complementarity. This limits their ability to capture dynamic interest transfer in realistic settings.
This motivates us to revisit CDSR from a preference-space modeling perspective, focusing on more effective and scalable knowledge integration.

%% file: body/3-pre.tex
\section{Preliminaries}




\subsection{Problem Formulation of CDSR}

Given a set of $K$ distinct domains denoted as $\mathcal{D} = \{ D_1, D_2, \dots, D_K \}$, each domain $D_k$ is characterized by its own user set $\mathcal{U}_k$ and item set $\mathcal{V}_k$. 
In a realistic hybrid user scenario, user sets may partially overlap ($\mathcal{U}_i \cap \mathcal{U}_j \neq \emptyset$), though most users remain domain-specific.
For a specific user $u \in \mathcal{U}_k$ in domain $D_k$, the interaction history is represented as a chronological sequence $\mathcal{S}_u^k = [v_1, v_2, \dots, v_T]$, where $v_t \in \mathcal{V}_k$ denotes the item interacted with at time step $t$, and $T$ is the current sequence length. 
The objective of CDSR is to accurately predict the next item $v_{T+1}$ by leveraging both the intra-domain sequential patterns and the inter-domain transferable knowledge.

Formally, let $\mathcal{M} = \{ \phi_1, \phi_2, \dots, \phi_K \}$ represent the ensemble of knowledge representations (e.g., model parameters or adapters) derived from all $K$ domains. 
The task is to learn a unified predictive framework that maximizes the following joint log-likelihood across all domains:
\begin{equation}
\max_{\mathcal{M}} \sum_{k=1}^{K} \sum_{u \in \mathcal{U}_k} \log P(v_{T+1} \mid \mathcal{S}_u^k; \mathcal{M}),
\end{equation}
where $P(v_{T+1} \mid \mathcal{S}_u^k; \mathcal{M})$ denotes the probability of the next-item interaction, conditioned on the historical sequence $\mathcal{S}_u^k$ and the cross-domain collaborative knowledge $\mathcal{M}$.

\subsection{LLM-based CDSR}
\label{sec:LLM CDSR}

\noindent \textbf{Generative Modeling for CDSR.} To bridge the semantic gap across $K$ domains, the CDSR task is reformulated within a generative language modeling paradigm. 
Specifically, for a user $u \in \mathcal{U}_k$, the interaction history $\mathcal{S}_u^k$ and the target item $v_{T+1} \in \mathcal{V}_k$ are transformed into textual representations (e.g., item titles or descriptions) via a template-based serialization function $\mathcal{T}(\cdot)$.

The input sequence $\mathcal{S}_u^k$ is encapsulated into a textual prompt $x = \mathcal{T}(\mathcal{S}_u^k)$, while the ground-truth next item is represented as a target string $y = \mathcal{T}(v_{T+1})$. 
This transformation maps domain-specific behaviors into a shared semantic space, enabling the LLM to leverage its inherent open-world knowledge for cross-domain reasoning. 
Given the prompt $x$, the model estimates the probability of generating the target sequence $y$ token by token. 
The instruction-tuning dataset for each domain is defined as $\mathcal{D}_k = \{ (\mathcal{T}(\mathcal{S}_u^k), \mathcal{T}(v_{T+1})) \mid u \in \mathcal{U}_k \}$.

Following the objective defined in Eq. (1), the training process optimizes the parameters $\Phi$ (representing the ensemble $\mathcal{M}$) to maximize the conditional log-likelihood:
\begin{equation}
\max_{\Phi} \sum_{k=1}^{K} \sum_{(x,y) \in \mathcal{D}_k} \sum_{i=1}^{|y|} \log P_{\Phi} (y_i \mid x, y_{<i}),
\end{equation}
where $y_i$ denotes the $i$-th token of the target sequence $y$, and $\mathcal{D}_k$ represents the instruction-tuning dataset derived from domain $D_k$. 
In this generative setup, the knowledge representations $\phi_k \in \mathcal{M}$ are manifested as domain-specific model states or adapters.

\vspace{0.2cm}
\noindent \textbf{Parameter-Efficient Fine-Tuning.} 
In practice, full-parameter fine-tuning of LLMs for each domain is often computationally prohibitive. 
To materialize the domain-specific knowledge ensemble $\mathcal{M} = \{ \phi_1, \dots, \phi_K \}$, Low-Rank Adaptation (LoRA) is commonly employed. 
For a pre-trained weight matrix $W$ within the LLM backbone $\Phi$, LoRA introduces a low-rank update:
\begin{equation}
W' = W + \Delta W = W + \frac{\alpha}{r} BA,
\end{equation}
where $A \in \mathbb{R}^{r \times d}$ and $B \in \mathbb{R}^{d \times r}$ are trainable matrices of rank $r \ll d$. 

In the CDSR setting, the shared backbone $\Phi$ is frozen, while domain-specific LoRA parameters $\theta_k = \{A_k, B_k\}$ are independently optimized for each domain $D_k$. 
Building upon Eq. (2), the objective for acquiring the $k$-th knowledge representation $\phi_k$ is:
\begin{equation}
\max_{\theta_k} \sum_{(x,y) \in \mathcal{D}_k} \sum_{i=1}^{|y|} \log P_{\Phi, \theta_k} (y_i \mid x, y_{<i}),
\end{equation}
where the resulting ensemble $\Theta = \{ \theta_1, \dots, \theta_K \}$ encapsulates modular knowledge from all domains, serving as the prerequisite for subsequent model merging.

%% file: body/4-met.tex
\section{Methodology}

\subsection{Empirical Study}
\label{sec:motivation}
\subsubsection{Model Merging for CDSR}
To integrate knowledge across multiple domains efficiently, model merging is introduced as a flexible paradigm for LLM-based CDSR. 
Building upon the PEFT framework in Sec. 3.2, the process of model merging is typically conducted in two stages:
\begin{itemize}[leftmargin=*]\setlength{\itemsep}{-\itemsep}
    \item \textbf{Domain-Specific Adaptation.} 
    For each domain $D_k \in \mathcal{D}$, a domain-specific knowledge representation $\phi_k$ is first acquired to capture its unique sequential patterns. 
    Following Eq. (4), this involves optimizing the LoRA parameters $\theta_k = \{A_k, B_k\}$ while keeping the backbone $\Phi$ frozen. 
    The resulting effective adapter update $\Delta W_k = \frac{\alpha}{r}B_kA_k$, induced by the LoRA factors $\theta_k=\{A_k,B_k\}$, captures the specialized sequential patterns of domain $D_k$.
    \item \textbf{Parameter-Space Fusion.} 
    Once the ensemble of adapters $\Theta = \{\theta_1, \dots, \theta_K\}$ is obtained, a unified merged model is formed by aggregating these domain-specific updates directly in the parameter space. 
    Formally, for a cross-domain recommendation task, the parameters of the merged model $\Phi_{merge}$ are defined as the combination of the frozen backbone $\Phi$ and the weighted average of all domain-specific adapters:
    \begin{equation}
    \Phi_{merge} = \Phi + \sum_{k=1}^{K}\lambda_k \cdot \Delta W_k = \Phi + \sum_{k=1}^{K}\lambda_k \cdot \left(\frac{\alpha}{r}B_kA_k\right),
    \end{equation}

    where $\lambda_k \in [0, 1]$ represents the merging coefficient (e.g., layer-wise or module-wise) assigned to the $k$-th domain, subject to the constraint $\sum_{k=1}^{K} \lambda_k = 1$. 
    This operation is applied to each LoRA-equipped weight matrix.
    This fusion process allows the model to leverage collaborative signals from both the target domain and multiple auxiliary source domains without the necessity of costly joint retraining.
\end{itemize}

Despite its efficiency, current model merging paradigms for CDSR predominantly operate under the \textbf{ideal assumption of merging a few closely related domains}. 
However, real-world recommendation scenarios often involve diverse domain pairs with inherent semantic conflicts and the necessity of integrating knowledge from a large number of auxiliary sources. 
This discrepancy between ideal assumptions and complex realities significantly limits the practical applicability and scalability of existing methods. 
Through empirical studies, we identify two fundamental bottlenecks that limit the scalability of current methods: (i) Cross-Domain Knowledge Conflict, and (ii) Performance Saturation.





\begin{figure}[t] 
    \centering

    \subfigure[Book → Movie]{\includegraphics[width=0.49\linewidth]{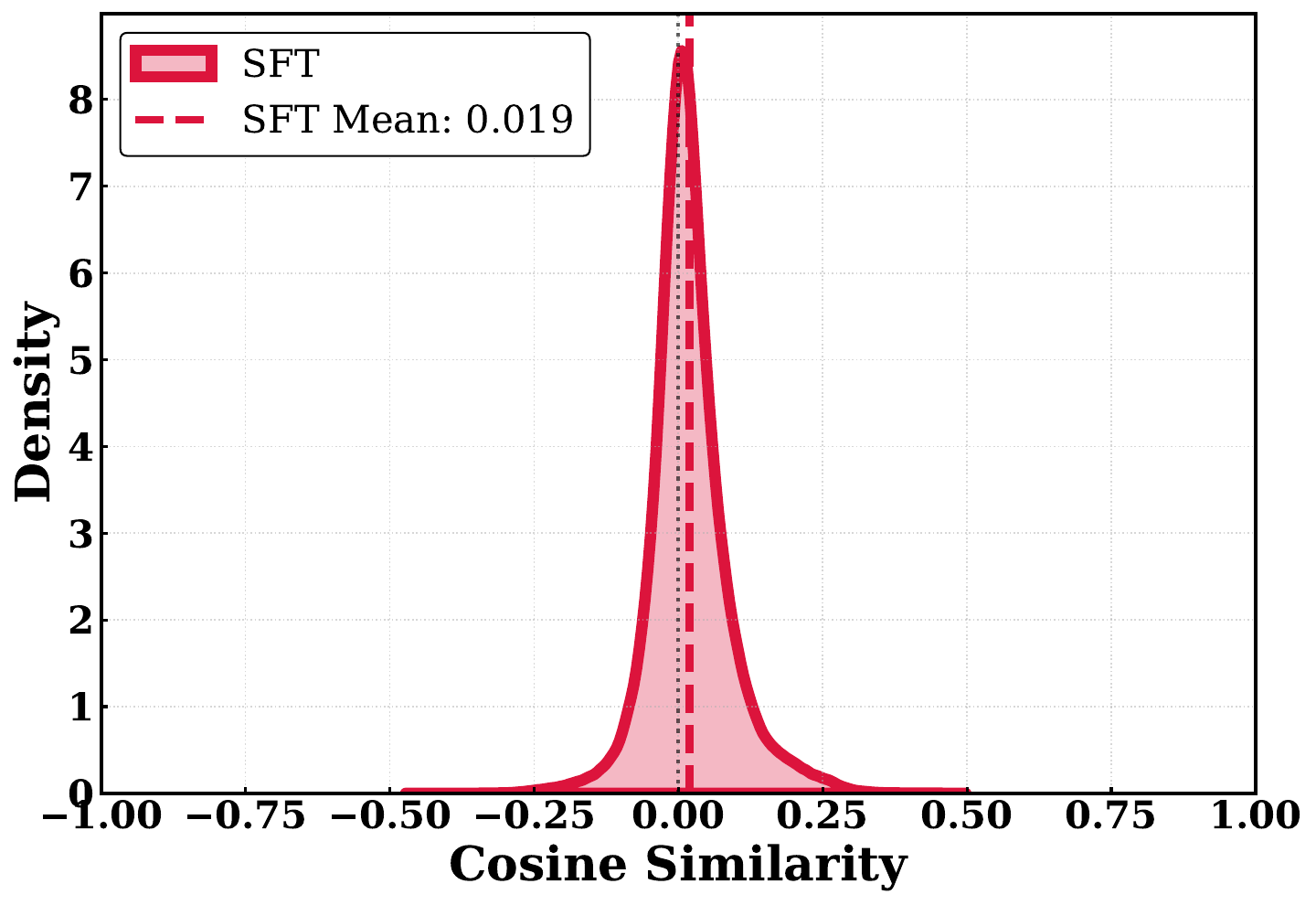}}
    \hfill %
    \subfigure[Sport → Toy]{\includegraphics[width=0.49\linewidth]{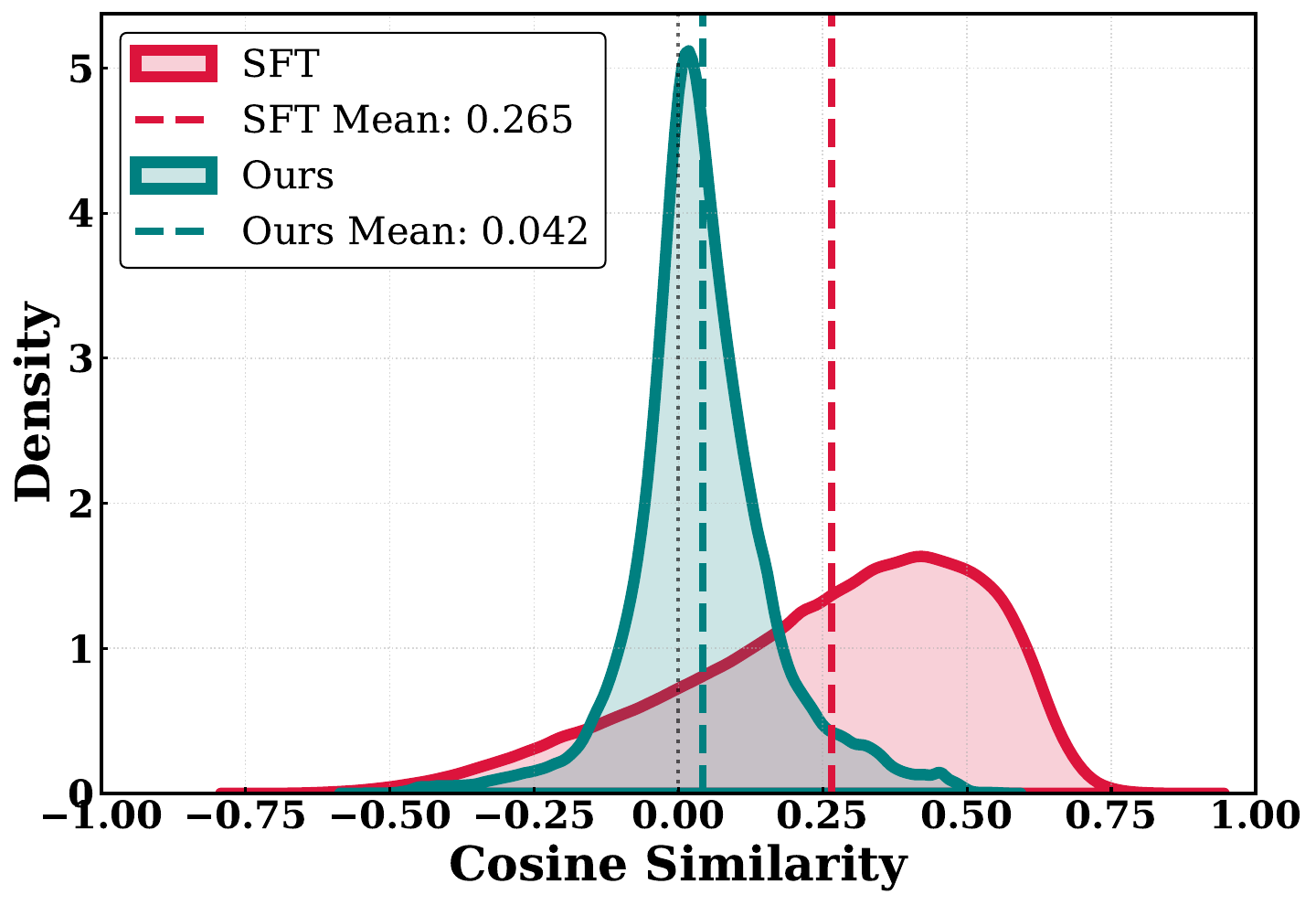}}


    \caption{Visualization of parameter geometric compatibility.}
    \label{fig:para_vis}
\end{figure}

\subsubsection{Cross-Domain Knowledge Conflict}

Our empirical investigation reveals that the efficacy of cross-domain enhancement is heavily contingent on domain compatibility. 
As illustrated in Figure~\ref{fig:intro}(a), merging highly correlated domains yields significant gains, whereas heterogeneous pairs suffer from negative transfer that cannot be resolved even by an exhaustive grid search over fusion weights (Figure~\ref{fig:intro}(b)). 
This persistent failure rules out linear weighting as a bottleneck, pointing instead towards intrinsic conflicts within the parameter space.

To further explore the underlying mechanism of this conflict, we visualize parameter interference by analyzing the distribution of cosine similarities between the parameter row vectors of the two models. 
For the compatible \textit{Book-Movie} pair (Figure~\ref{fig:para_vis}(a)), the distribution approximates a zero-mean Gaussian, indicating that most parameter vectors are orthogonal, thereby allowing different domain knowledge to coexist with minimal interference. 
In stark contrast, the divergent \textit{Sport-Toy} pair (Figure~\ref{fig:para_vis}(b)) exhibits a right-skewed distribution with a mean of 0.265. 
This significant deviation from orthogonality reveals that the parameter vectors are geometrically entangled, causing severe collision when linearly superimposed and destroying the specialized preference structures of the target domain.
Therefore, it is imperative to address the following challenge:

\vspace{5pt}

\noindent \textbf{Challenge 1: How to align geometrically incompatible sharp minima to mitigate parameter interference and negative transfer?}

\subsubsection{Rapid Saturation in Multi-domain Fusion}

Practical CDSR scenarios often necessitate integrating diverse auxiliary domains to bolster the target domain. 
However, our empirical investigation uncovers a significant scalability bottleneck.
As illustrated in Figure~\ref{fig:intro}(c), simply increasing the number of source domains yields diminishing returns, with performance gains rapidly hitting a distinct plateau. 
This saturation suggests an inherent upper bound in existing model merging paradigms, where incorporating more diverse data fails to translate into sustained improvements.

To investigate the root cause of this saturation, we visualize the evolution of parameter distributions as more domains are integrated. 
Specifically, we plot the density of the LoRA parameter updates ($\Delta W$) for merged models with $N=1, 3, 7$ source domains. 
As shown in Figure~\ref{fig:intro}(d), the results reveal a distinct trend toward statistical homogenization. 
For the single-domain model ($N=1$), the parameter distribution is relatively flat with heavy tails, indicating a rich presence of high-magnitude weights that encode domain-specific knowledge. 
However, as $N$ increases to 7, the distribution progressively converges to a sharp, narrow Gaussian form concentrated around zero.
This implies that linear aggregation functions as a mean filter, effectively smoothing out distributional outliers—the salient parameters essential for complex reasoning. 
The erosion of these salient signals restricts the model's expressive capacity, highlighting that the performance ceiling is imposed by the linear fusion mechanism itself. 
Therefore, it is necessary to address the following challenge:

\vspace{5pt}

\noindent \textbf{Challenge 2: How to counteract statistical homogenization and reactivate salient preference signals to surmount the performance ceiling in multi-domain fusion?}

\subsection{Sharpness-aware Geometric Alignment}
\label{sec:sga}
To address Challenge 1, we propose Sharpness-aware Geometric Alignment (SGA), which minimizes the discrepancy between the merged model and individual domain experts to mitigate parameter interference. We define this objective as ensuring that aggregated parameters remain within the low-loss regions of each constituent model. 

Specifically, for a domain $D_k$, the ideal adapter $\theta_k$ should not only minimize the empirical risk on its local dataset $\mathcal{D}_k$ but also maintain stability under the parameter shifts $\Delta \theta$ induced by merging. Formally, the optimization objective for SGA is defined as:
\begin{equation}
\min_{\theta_k} \underbrace{\left( \mathcal{L}(\Phi, \theta_k + \Delta \theta; \mathcal{D}_k) - \mathcal{L}(\Phi, \theta_k; \mathcal{D}_k) \right)}_{\text{Merging Interference Resistance}} + \underbrace{\mathcal{L}(\Phi, \theta_k; \mathcal{D}_k)}_{\text{Domain-Specific Accuracy}},
\label{merging_obj}
\end{equation}
where $\theta_k = \{A_k, B_k\}$ represents the LoRA parameters. Since $\Delta \theta$ is unpredictable during independent fine-tuning, we treat it as a worst-case stochastic perturbation $\boldsymbol{\epsilon}$ within a radius $\rho$. As proven in Appendix ~\ref{app:sga_derivation}, this formulation allows us to reformulate Eq. (\ref{merging_obj}) into a sharpness-aware min-max objective:
\begin{equation}
\min_{\theta_k} \max_{|\boldsymbol{\epsilon}|_2 \leq \rho} \mathcal{L} \left( \Phi, \theta_k + \boldsymbol{\epsilon}; \mathcal{D}_k \right).
\label{sharp_obj}
\end{equation}
To handle the scale variance across Transformer layers, we employ an adaptive perturbation $\hat{\epsilon}$ to approximate the inner maximization:
\begin{equation}
\hat{\epsilon} = \rho \frac{\theta_k^2 \nabla_{\theta_k} \mathcal{L}(\theta_k; \mathcal{D}_k)}{| \nabla_{\theta_k} \mathcal{L}(\theta_k; \mathcal{D}_k) |}.
\end{equation}
By updating the adapters using gradients at the perturbed state $\theta_k + \hat{\epsilon}$, SGA effectively guides the optimization toward a \textbf{flat minimum}. 
This geometric alignment ensures that diverse domain experts converge to connected low-loss basins, preventing the destruction of shared user preference structures during fusion.
Consequently, SGA facilitates the seamless transfer of sequential behavioral patterns across heterogeneous domains, effectively resolving the parameter interference (Challenge 1) that typically hinders robust CDSR.

\paragraph{Theoretical Analysis}

To elucidate the efficacy of SGA, we analyze how geometric flatness mitigates negative transfer by defining the merging interference error ($\delta$):
\begin{equation}
    \delta = \mathcal{L}(\Phi, \lambda \theta_A + (1-\lambda)\theta_B; \mathcal{D}) - \left[ \lambda \mathcal{L}(\Phi, \theta_A; \mathcal{D}) + (1-\lambda)\mathcal{L}(\Phi, \theta_B; \mathcal{D}) \right].
\end{equation}
A positive $\delta$ indicates that the merged parameters reside in a high-loss barrier, signifying severe interference. We establish an upper bound for this error (proof in Appendix ~\ref{app:interference_bound}):
\begin{theorem}[Bound on Merging Interference]
Assuming the loss function $\mathcal{L}$ is twice differentiable, the interference error $\delta$ for merging domain-specific LoRA adapters is strictly bounded by:
\begin{equation}
    |\delta| \leq \frac{1}{2}\lambda(1-\lambda) \underbrace{\left( \sigma(\theta_A) + \sigma(\theta_B) \right)}_{\text{Sharpness}} \cdot \underbrace{\|\theta_A - \theta_B\|^2}_{\text{Domain Divergence}} + \mathcal{O}(\epsilon),
\end{equation}
where $\sigma(\theta) = \lambda_{max}(\nabla^2_{\theta} \mathcal{L})$ represents the spectral norm of the Hessian, quantifying local sharpness.
\label{theorem_1}
\end{theorem}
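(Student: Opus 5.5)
The plan is a second-order Taylor argument about the merged point $\theta_\lambda = \lambda\theta_A + (1-\lambda)\theta_B$, applied in the LoRA parameter space with $\Phi$ and $\mathcal{D}$ held fixed. Write $d = \theta_A - \theta_B$. Then
\begin{equation*}
\theta_A = \theta_\lambda + (1-\lambda)d, \qquad \theta_B = \theta_\lambda - \lambda d.
\end{equation*}
Expand $\mathcal{L}(\theta_A)$ and $\mathcal{L}(\theta_B)$ to second order around $\theta_\lambda$. With $g = \nabla\mathcal{L}(\theta_\lambda)$ and $H = \nabla^2\mathcal{L}(\theta_\lambda)$:
\begin{equation*}
\mathcal{L}(\theta_A) = \mathcal{L}(\theta_\lambda) + (1-\lambda) g^\top d + \tfrac12(1-\lambda)^2 d^\top H d + R_A,
\end{equation*}
\begin{equation*}
\mathcal{L}(\theta_B) = \mathcal{L}(\theta_\lambda) - \lambda g^\top d + \tfrac12\lambda^2 d^\top H d + R_B.
\end{equation*}
Form the convex combination with weights $\lambda$ and $1-\lambda$. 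The first-order terms cancel exactly, since $\lambda(1-\lambda) - (1-\lambda)\lambda = 0$. The quadratic coefficient is $\lambda(1-\lambda)^2 + (1-\lambda)\lambda^2 = \lambda(1-\lambda)$. This gives
\begin{equation*}
\delta = -\tfrac12\lambda(1-\lambda)\, d^\top H d - \lambda R_A - (1-\lambda) R_B.
\end{equation*}

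To make the stated sharpness terms $\sigma(\theta_A)$ and $\sigma(\theta_B)$ appear, I would use the integral (Lagrange) form of the remainder instead of evaluating the Hessian at $\theta_\lambda$. This writes each second-order term with the Hessian taken along the segment $[\theta_\lambda, \theta_A]$ or $[\theta_\lambda, \theta_B]$. I would then bound each Hessian by its value at the nearer endpoint plus a Lipschitz-Hessian correction of order $\|d\|^3$. That correction is what gets absorbed into the $\mathcal{O}(\epsilon)$ term: I would read $\epsilon$ as the size of the higher-order Taylor remainder, which is small when $\|d\|$ lies in the local neighbourhood the SGA radius $\rho$ controls. Then bound each quadratic form by $|d^\top H d| \le \sigma\|d\|^2$, reading $\sigma$ as the spectral norm of the Hessian, as the statement says.

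To get the symmetric sum, I would split the weight $\lambda(1-\lambda)$ across the two expansions. The $A$-side quadratic contributes $\lambda(1-\lambda)^2 \le \lambda(1-\lambda)$ times $\sigma(\theta_A)\|d\|^2$, and the $B$-side analogously contributes at most $\lambda(1-\lambda)\,\sigma(\theta_B)\|d\|^2$. Adding the two gives the factor $\tfrac12\lambda(1-\lambda)(\sigma(\theta_A)+\sigma(\theta_B))$. The bound is loose by at most a factor of two, which is acceptable for an upper bound.

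The main obstacle is justifying the move from the Hessian at intermediate points to the Hessians at the endpoints $\theta_A, \theta_B$. Twice differentiability alone does not control this. I would state an implicit local Lipschitz assumption on $\nabla^2\mathcal{L}$, or equivalently treat this discrepancy as part of $\mathcal{O}(\epsilon)$, and say explicitly that this is how the $\mathcal{O}(\epsilon)$ term is interpreted.

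A secondary subtlety is that LoRA merging averages the products $B_kA_k$, not the factors $\theta_k$. I would either work directly in the space of $\Delta W$, where the merge is linear, or note that the statement's $\lambda\theta_A + (1-\lambda)\theta_B$ treats $\theta$ as the effective update. Under either reading the argument above goes through unchanged.
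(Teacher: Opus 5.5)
Your argument is correct, but it runs the Taylor expansion in the opposite direction from the paper.

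The paper expands $\mathcal{L}(\theta_{merge})$ around each endpoint. It discards the first-order terms by assuming both experts sit at local minima. It then combines the two expansions through $\delta=\lambda[\mathcal{L}(\theta_{merge})-\mathcal{L}(\theta_A)]+(1-\lambda)[\mathcal{L}(\theta_{merge})-\mathcal{L}(\theta_B)]$ to get $\delta\approx\tfrac12\lambda(1-\lambda)[(1-\lambda)d^\top\mathbf{H}_A d+\lambda\, d^\top\mathbf{H}_B d]$. Finally it applies the Rayleigh bound and the relaxation $(1-\lambda)\sigma(\theta_A)+\lambda\sigma(\theta_B)\le\sigma(\theta_A)+\sigma(\theta_B)$.

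You expand around $\theta_\lambda$ instead, so the gradient terms cancel identically. After moving the Hessians to the endpoints you reach the same weighted intermediate bound and the same relaxation.

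Your route buys two things.
\begin{itemize}
\item It proves the statement under the hypotheses the theorem actually lists. The stationarity assumption appears only in the paper's proof, and it is load-bearing there: the discarded term $-\lambda(1-\lambda)(\nabla\mathcal{L}(\theta_A)-\nabla\mathcal{L}(\theta_B))^\top d$ is of the same order as the retained quadratic term unless the gradients really vanish.
\item It gives the undefined $\mathcal{O}(\epsilon)$ a concrete form, namely $\mathcal{O}(M\lambda(1-\lambda)\|d\|^3)$ for an $M$-Lipschitz Hessian.
\end{itemize}

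The price is extra regularity and a stronger reading of $\sigma$.
\begin{itemize}
\item The curvature your identity naturally produces lives on the segments through $\theta_\lambda$. Reaching $\sigma(\theta_A),\sigma(\theta_B)$ therefore needs more than twice differentiability: a continuous Hessian gives an $o(\|d\|^2)$ correction, a Lipschitz one gives $\mathcal{O}(\|d\|^3)$.
\item You never know the Hessians are positive semidefinite, so you genuinely need $\sigma$ to be the spectral norm rather than $\lambda_{\max}$.
\end{itemize}
The paper instead reads the endpoint Hessians off a Peano remainder at $\theta_A,\theta_B$, and gets semidefiniteness for free from the minimum assumption.

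One caveat on your gloss of $\mathcal{O}(\epsilon)$. The SGA radius $\rho$ bounds perturbations of each adapter individually, not $\|\theta_A-\theta_B\|$, so it does not make the cubic remainder small. That remainder is negligible only when $M\|d\|\ll\sigma$, a locality restriction the paper's expansion shares.
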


The theorem reveals that interference in CDSR is driven by the product of domain divergence  $\|\theta_A - \theta_B\|^2$ and Loss Sharpness $\sigma(\theta)$. While divergence is an inherent trait of data heterogeneity, sharpness is a controllable geometric property. By explicitly minimizing $\sigma(\theta)$ during adaptation, SGA ensures the merging path remains within a low-loss basin, theoretically guaranteeing robust cross-domain knowledge fusion.

\subsection{Preference Salience Activation}
\label{sec:psa}

To address Challenge 2, we propose Preference Salience Activation (PSA), which reconstructs the parameter distribution from a homogenized Gaussian form into a heavy-tailed distribution. By increasing the probability density of distributional outliers, this transformation effectively preserves and amplifies the salient weights encoding domain-specific expertise that are otherwise eroded by the linear ensemble average.

Specifically, let $\theta_{merge} = \sum_{k=1}^{K} \lambda_k \theta_k$ be the aggregated LoRA parameters from the SGA phase. To break the over-smoothed convergence center, we implement a stochastic disentanglement procedure by introducing an independent Gaussian noise $G \sim \mathcal{N}(0, \sigma_g^2 I)$:
\begin{equation}
    \tilde{\theta} = \theta_{merge} - G.
\end{equation}
This operation shifts the parameters away from the high-density mean, expanding the geometric capacity required for the subsequent non-linear activation of preference signals.

To reactivate the sparse, high-order neurons suppressed by linear averaging, we apply an element-wise non-linear projection $T(\cdot)$ to $\tilde{\theta}$, inducing a heavy-tailed distribution that preserves salient parameters distant from the mean:
\begin{equation}
    \theta_{PSA} = T(\tilde{\theta}) = \text{sign}(\tilde{\theta}) \cdot |\tilde{\theta}|^\gamma \cdot \left( 1 + \alpha e^{-\beta |\tilde{\theta}|} \right),
\end{equation}
where $0 < \gamma < 1$ regulates tail heaviness, and $\alpha, \beta > 0$ control smoothness. The proof that this transformation induces a heavy-tailed distribution is provided in Appendix ~\ref{app:heavy_tail_proof}. By amplifying large-magnitude outliers while suppressing near-zero noise, PSA restores the saliency of domain-specific features. This mechanism ensures high-fidelity knowledge fusion as domains scale, effectively surmounting the performance ceiling identified in Challenge 2.

\paragraph{Theoretical Analysis}
To theoretically justify how PSA mitigates performance saturation, we analyze the relationship between the parameter distribution and the model's functional capacity. 

\begin{theorem}[PSA Expands Preference Coverage]
Let the model function space coverage be defined as 
\begin{equation}
\mathcal{C}(\mathcal{F}) = \int_{\mathcal{W}} |\det(J_{\Phi}(\mathbf{w}))| p_{\mathbf{w}}(\mathbf{w}) d\mathbf{w},
\end{equation}
where $J_{\Phi}(\mathbf{w})$ is the Jacobian matrix. If the initial parameter distribution $p_{\mathbf{w}}$ is Gaussian, the heavy-tailed distribution $p_{\mathbf{w}''}$ induced by PSA ensures that the coverage of the transformed model $\mathcal{C}_2$ is strictly greater than the original model $\mathcal{C}_1$, i.e., $\mathcal{C}_2 > \mathcal{C}_1$. Proof in Appendix ~\ref{app:coverage_proof}.
\label{theorem_2}
\end{theorem}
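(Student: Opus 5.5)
The plan is to rewrite both coverages as expectations over the \emph{same} Gaussian variable, using the fact that $\mathbf{w}'' = T(\mathbf{w} - G)$ is a pushforward. Then $\mathcal{C}_2 > \mathcal{C}_1$ reduces to a pointwise or averaged comparison of $|\det J_\Phi|$ at transformed versus untransformed points. Since the statement leaves $J_\Phi$ unspecified, I will make the implicit structural assumption explicit: the coverage integrand $g(\mathbf{w}) := |\det(J_\Phi(\mathbf{w}))|$ is radially non-decreasing in each coordinate magnitude, i.e.\ it increases with $|w_i|$. For a LoRA adapter this is natural, because $\Phi(\mathbf{w}) = W + \frac{\alpha}{r}BA$ scales with parameter magnitude, so Jacobian volumes grow polynomially in $|w_i|$. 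I will also assume $g$ is not constant on a set of positive Gaussian measure. Under these assumptions, $\mathcal{C}_1 = \mathbb{E}_{\mathbf{w}\sim\mathcal{N}}[g(\mathbf{w})]$ and $\mathcal{C}_2 = \mathbb{E}_{\mathbf{w},G}[g(T(\mathbf{w}-G))]$.

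\textbf{Step 1 (noise step).} Since $\mathbf{w}$ is Gaussian and $G$ is an independent centered Gaussian, $\tilde{\mathbf{w}} = \mathbf{w} - G$ is Gaussian with the same mean and covariance $\Sigma + \sigma_g^2 I \succ \Sigma$. Coordinatewise, $|\tilde w_i|$ stochastically dominates $|w_i|$ when the mean is zero, which is the homogenized regime the paper describes. Monotonicity of $g$ then gives $\mathbb{E}[g(\tilde{\mathbf{w}})] \ge \mathcal{C}_1$. For nonzero mean, I would invoke Anderson's inequality, or simply restrict to the centered case.

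\textbf{Step 2 (transform step).} Analyze the scalar map $T(t) = \mathrm{sign}(t)|t|^\gamma(1+\alpha e^{-\beta|t|})$.
\begin{itemize}
\item For $|t| < 1$, $|t|^\gamma > |t|$, and the factor $1+\alpha e^{-\beta|t|}$ exceeds $1$, so $|T(t)| > |t|$.
\item For large $|t|$, $|T(t)| \approx |t|^\gamma$, which is smaller than $|t|$.
\end{itemize}
So pointwise dominance fails in the tails. The cleanest route is therefore to appeal to the heavy-tail result in Appendix~\ref{app:heavy_tail_proof}, reading it as: the distribution of $|T(\tilde w_i)|$ stochastically dominates that of $|w_i|$. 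Combined with monotonicity of $g$ and a coupling argument (quantile coupling coordinatewise, then independence across coordinates), this yields $\mathcal{C}_2 \ge \mathbb{E}[g(\tilde{\mathbf{w}})] \ge \mathcal{C}_1$.

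\textbf{Step 3 (strictness).} Strictness comes from the $|t|<1$ region, where $|T(t)| > |t|$ strictly. This region has positive Gaussian mass, and $g$ is strictly increasing on a positive-measure set, so the inequality is strict.

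\textbf{Main obstacle.} The hardest step is justifying stochastic dominance of $|T(\tilde w)|$ over $|w|$ despite the tail compression $|t|^\gamma < |t|$ for large $|t|$. My first attempt is to use the added noise variance $\sigma_g^2$ to compensate: choose $\sigma_g$ large enough relative to $\gamma$ and $\alpha$ that $P(|T(\tilde w)| > s) \ge P(|w| > s)$ for all $s$, checking the crossing point of the two survival functions explicitly.

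If that fails, the fallback is to weaken the monotonicity assumption to $g$ depending on a bounded range of magnitudes, where the $|t|<1$ expansion dominates. In that form the theorem holds as an inequality between expectations over the relevant parameter scale, which matches the LoRA regime where the entries are small.
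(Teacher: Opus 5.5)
\paragraph{The gap in Step~2.} You need $|T(\tilde w_i)|$ to stochastically dominate $|w_i|$ at every threshold $s$, and the repair you sketch under ``Main obstacle'' cannot supply this. Since $|T(t)|\approx |t|^{\gamma}$ for large $|t|$, one has $P(|T(\tilde w_i)|>s)\approx P(|\tilde w_i|>s^{1/\gamma})$. This decays like $\exp\bigl(-s^{2/\gamma}/(2(\sigma^2+\sigma_g^2))\bigr)$, whereas $P(|w_i|>s)$ decays like $\exp\bigl(-s^{2}/(2\sigma^2)\bigr)$. Because $2/\gamma>2$, the two survival functions cross for every finite $\sigma_g$, so no choice of noise level restores dominance.

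Appendix~\ref{app:heavy_tail_proof} does not supply it either. Its asymptotic density $\propto |y|^{1/\gamma-1}\exp\bigl(-|y|^{2/\gamma}/(2\sigma^2)\bigr)$ is a generalized error distribution with shape $2/\gamma>2$, i.e.\ \emph{lighter}-tailed than the Gaussian, which contradicts the reading you want to give it.

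Your stated hypotheses do not imply the conclusion at all. Take $g(\mathbf w)=|w_1|$, which is radially non-decreasing and non-constant, and set $\tilde\sigma^2=\sigma^2+\sigma_g^2$. Jensen gives $\mathcal C_2=\mathbb E|T(\tilde w_1)|\le(1+\alpha)\bigl(\sqrt{2/\pi}\,\tilde\sigma\bigr)^{\gamma}$, which falls below $\mathcal C_1=\sqrt{2/\pi}\,\sigma$ once $\sigma$ is large. The statement is scale-dependent: $|T|$ expands magnitudes below its fixed point $x^*>1$ and contracts them above it. Any valid proof must therefore impose a scale restriction somewhere. The same defect breaks your intermediate inequality $\mathcal C_2\ge\mathbb E[g(\tilde{\mathbf w})]$, which would require $|T(t)|\ge|t|$ for all $t$.

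Two smaller points:
\begin{itemize}
\item Anderson's inequality does not handle a nonzero mean. For $|\mu|>s$, inflating the variance of $\mathcal N(\mu,\tau^2)$ can \emph{increase} $P(|\tilde w|\le s)$. Stay in the centered case, as Appendix~\ref{app:heavy_tail_proof} does.
\item Your LoRA justification of monotone $g$ is heuristic. The Jacobian of $(A,B)\mapsto BA$ is not square, and determinants are not coordinatewise monotone in entry magnitudes. State monotonicity as an assumption.
\end{itemize}

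\paragraph{The fallback works if you make it the hypothesis.} Assume $\mathbf w$ is centered with diagonal covariance, and $g(\mathbf w)=h\bigl(\min(|w_1|,1),\dots,\min(|w_n|,1)\bigr)$ with $h$ coordinatewise non-decreasing and non-constant. Then for every $s<1$ you have the inclusion $\{|\tilde w_i|>s\}\subseteq\{|T(\tilde w_i)|>s\}$:
\begin{itemize}
\item if $|\tilde w_i|\le 1$, use $|T(t)|>|t|^\gamma\ge|t|$;
\item if $|\tilde w_i|>1$, use $|T(t)|>|t|^\gamma>1$.
\end{itemize}
The inclusion gains strictly on a set of positive probability. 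Combined with your variance step, this gives dominance of the clipped magnitudes. A product quantile coupling over the independent coordinates then yields $\mathcal C_2>\mathcal C_1$.

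\paragraph{Comparison with the paper.} The paper's proof in Appendix~\ref{app:coverage_proof} splits $\mathcal W$ into a center $\mathcal W_C$ and a tail $\mathcal W_T$. It posits $p_{\mathrm{PSA}}>p_{\mathrm{Gauss}}$ on all of $\mathcal W_T$ and a uniform gap in $|\det J_\Phi|$ between the regions, then concludes by mass conservation. It is shorter, but it simply asserts the sign pattern. Your repaired coupling argument derives the ordering from the explicit form of $T$, and only needs monotonicity of $g$ in clipped magnitudes rather than a sharp two-region gap. It also exposes that the paper needs the same cap: in the far tail the PSA density falls back below the Gaussian. So $\mathcal W_T$ must exclude large magnitudes for $p_{\mathrm{PSA}}-p_{\mathrm{Gauss}}>0$ to hold there.
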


The Jacobian determinant $|\det(J_{\Phi}(\mathbf{w}))|$ reflects the model's functional sensitivity to parameter variations. In deep recommendation architectures, parameters in the tail region $\mathcal{W}_T$ (outliers) typically evoke more diverse functional forms and stronger activations, representing complex, long-tail user interests. While linear merging triggers Gaussianization that concentrates probability mass in the low-sensitivity center, PSA redistributes density toward the tail region where the Jacobian determinant is larger. This redistribution maximizes the coverage integral $\mathcal{C}(\mathcal{F})$, theoretically guaranteeing that the merged model recovers the capacity to capture intricate cross-domain preferences and surmounts the performance ceiling of multi-domain fusion.

%% file: body/5-exp.tex
\section{Experiments}
To evaluate the effectiveness of \method{}, we conduct experiments to answer the following Research Questions (RQ):
\begin{itemize}[leftmargin=*]\setlength{\itemsep}{-\itemsep} 
    \item \textbf{RQ1}: How does \method{} perform against CDSR baselines? 
    \item \textbf{RQ2}: Can \method{} mitigate performance saturation as source domains increase?
    \item \textbf{RQ3}: How do the \textbf{SGA} and \textbf{PSA} modules contribute to model performance?
    \item \textbf{RQ4}: Does \method{} effectively enhance single-domain sequential recommenders? 
    \item \textbf{RQ5}: Is \method{} robust to varying ratios of overlapping users? 
    \item \textbf{RQ6}: How do key hyperparameters affect \method{}'s sensitivity?
\end{itemize}

\subsection{Experimental Setup}

\subsubsection{Datasets.}
We conduct experiments on the Amazon Review 2023 dataset, selecting seven distinct domains (Sport, Clothing, Movie, Book, Food, Kitchen, and Toy). 
We construct three primary dual-domain pairs for the main performance comparison: Book$\leftrightarrow$Movie, Kitchen$\leftrightarrow$Food, and Sport$\leftrightarrow$Toy, each simulating a realistic hybrid user scenario. 

We apply a 5-core filtering strategy to ensure data quality, represent each item by its textual ``title'' attribute, and truncate interaction sequences to a length of 6--20 items. 
Following the leave-one-out protocol~\citep{geng2022recommendation, lin2024bridging}, we split the sequences into training, validation, and testing sets at an 8:1:1 ratio, where the last item of each validation/test sequence serves as the prediction target and the preceding items form the historical context.
For the scalability analysis (RQ2), we extract 10,000 interaction sequences per domain. 
For the overlap robustness analysis (RQ5), we follow~\cite{liu2025bridge} to generate controlled datasets with overlap ratios of 80\%, 60\%, 40\%, and 20\%. 
During evaluation, each target item is ranked against a candidate pool of 30 items, comprising the ground truth, 10 hard negatives selected by global co-occurrence frequency, and 19 randomly sampled items. 
To adapt the sequential interaction data for LLM-based inputs, we encapsulate each user's historical context and the candidate pool into a structured conversational prompt, as illustrated in Figure~\ref{fig:prompt_template}.
Detailed dataset statistics are summarized in Table~\ref{tab:dataset}.

\begin{figure}[h]
  \centering
  \includegraphics[width=\linewidth]{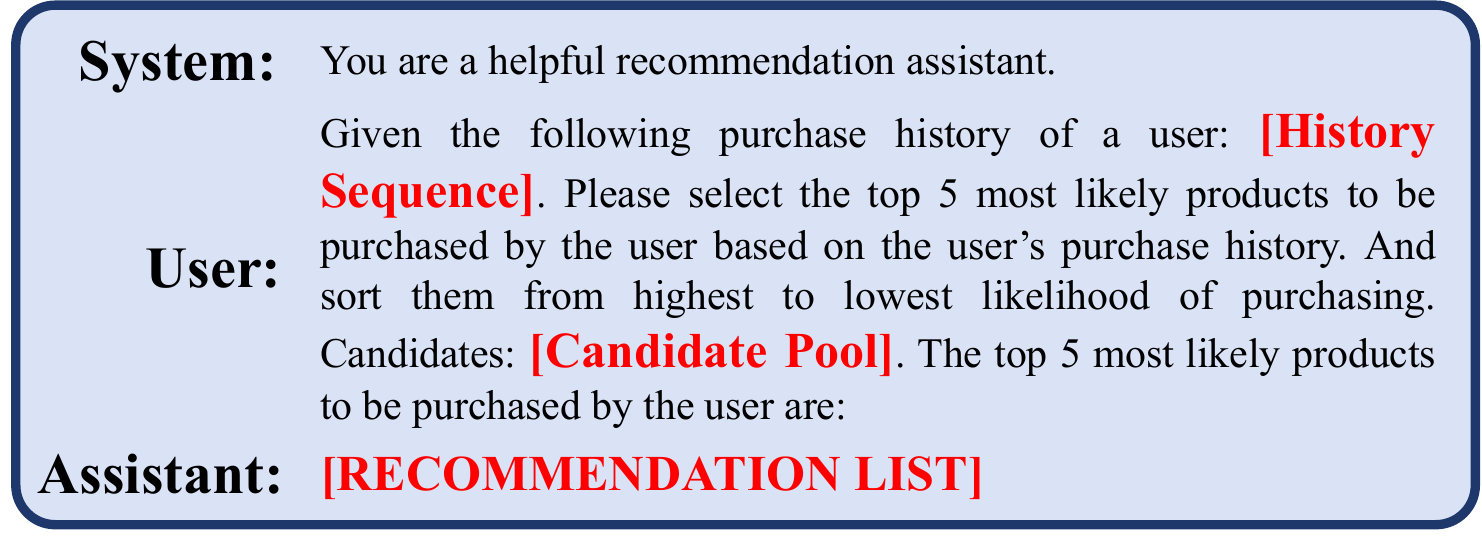}
  \caption{The prompt template for our \method{} framework, illustrating the integration of historical sequences and candidate pools within a conversational LLM structure.}
  \label{fig:prompt_template}
\end{figure}

\begin{table}[t]
\centering
\caption{Dataset statistics.}
\vspace{-0.5em}
\label{tab:dataset}
\resizebox{\linewidth}{!}{
    \begin{tabular}{ccccccc}
        \toprule
        \toprule
        Domain & \# Users & \# Items & \# Interactions & Sparsity & \# Overlap Users & \# Overlap Ratio \\ \midrule
        Book    & 31,271 & 23,133 & 343,514 & 99.95\% & 13,182 & 42.15\% \\
        Movie   & 31,544 & 25,861 & 471,520 & 99.94\% & 13,182 & 41.79\% \\ \midrule
        Kitchen & 22,127 & 9,758  & 149,019 & 99.93\% & 10,015 & 45.26\% \\
        Food    & 22,150 & 11,009 & 190,274 & 99.92\% & 10,015 & 45.21\% \\ \midrule
        Sport   & 10,177 & 6,605  & 115,388 & 99.82\% & 2,171  & 21.33\% \\
        Toy     & 12,104 & 8,940  & 128,046 & 99.88\% & 2,171  & 17.94\% \\
        \bottomrule
        \bottomrule
    \end{tabular}
}
\vspace{-1em}
\end{table}

\subsubsection{Baselines.}
To evaluate the effectiveness of our proposed framework, we compare it against four categories of state-of-the-art methods, including Single-Domain Sequential Recommendation (SDSR), Traditional Cross-Domain Sequential Recommendation (CDSR), LLM-enhanced CDSR, and Model Merging approaches.

\begin{itemize}[leftmargin=*]\setlength{\itemsep}{-\itemsep}
\item \textbf{Single-Domain Sequential Recommendation (SDSR).} (1) \textbf{GRU4Rec}~\cite{hidasi2015session} employs Gated Recurrent Units (GRU) to model sequential interactions within a session. (2) \textbf{SASRec}~\cite{kang2018self} utilizes self-attention mechanisms to capture long-term dependencies in behavior sequences.

\item \textbf{Traditional CDSR.} (3) \textbf{$\pi$-Net}~\cite{ma2019pi} introduces a shared account filter unit and a gating mechanism to transfer knowledge for overlapping users. (4) \textbf{C2DSR}~\cite{cao2022contrastive} employs graph neural networks to jointly learn intra-domain and inter-domain item relationships. (5) \textbf{SyNCRec}~\cite{park2024pacer} mitigates negative transfer via a mixture-of-experts framework equipped with a gradient stop mechanism.

\item \textbf{LLM-based CDSR.} (6) \textbf{URLLM}~\cite{shen2024exploring} integrates LLM-generated semantic representations with traditional collaborative models. (7) \textbf{LLM4CDSR}~\cite{liu2025bridge} aligns LLM-derived knowledge with collaborative signals to enhance cross-domain accuracy. (8) \textbf{LLMCDSR}~\cite{xin2025llmcdsr} leverages LLMs to model universal user preferences to bridge the domain gap.

\item \textbf{Model Merging Methods.} (9) \textbf{Data-Merge} is an instruction-tuning baseline that fine-tunes a single LLM using a unified SFT dataset aggregated from multi-domain interactions. (10) \textbf{X-Cross}~\cite{hadad2025x} integrates fine-tuned domain-specific LoRA modules to facilitate knowledge transfer. (11) \textbf{WeaveRec}~\cite{hou2025weaverec} utilizes LLMs to synthesize cross-domain recommendations through parallel reasoning chains and linear model merging.
\end{itemize}

\subsubsection{Evaluation Metrics}
Following previous works~\cite{cao2022contrastive,  kang2018self}, we adopt commonly used Top-$k$ metrics, specifically Hit Rate (HR@$k$), Normalized Discounted Cumulative Gain (NDCG@$k$), and Mean Reciprocal Rank (MRR), with $k \in \{3, 5\}$ to evaluate recommendation performance. 
For the main experiments, we repeat them five times and report the average results~\footnote{Our code is available at \href{https://github.com/muyiahhh/SharpRec}{https://github.com/muyiahhh/SharpRec}.}. 

\subsubsection{Implementation Details.}

All experiments are implemented using PyTorch and conducted on a high-performance computational cluster equipped with 8 NVIDIA A100 (80GB) and 8 NVIDIA RTX 4090 (24GB) GPUs. 
We utilize the pre-trained Llama-2-7b as the backbone model for all LLM-based methods. 
To ensure fair comparison, the configurations of all baselines are strictly aligned with the optimal settings reported in their respective original papers.

Regarding training configuration and hyperparameters, we employ a global batch size of 8 and a fixed learning rate of $2 \times 10^{-4}$. 
All models are fine-tuned for 2 epochs to ensure convergence while mitigating overfitting. For \method{}, the balancing coefficients of the PSA module are set to $\alpha = 0.1$ and $\beta = 10$. For the core optimization modules, we configure the sharpness-aware perturbation radius at $\rho = 0.01$, the salience activation factor at $\gamma = 0.98$, and the noise injection parameter at $\sigma_g = 0.0001$. 
Sensitivity analyses on $\rho$, $\gamma$, and $\sigma_g$ are further conducted to verify the stability of these configurations, with results reported in Section~\ref{sec:sensitivity}.

\begin{table*}[h]
\centering
\caption[Recommendation performance]{Performance comparison on Cross-Domain Sequential Recommendation tasks.  Best results are typeset in bold with a colored background, e.g., \protect\colorbox[HTML]{95D4EE}{\textbf{80.13$\pm$1.32}}. Runner-up results use a different background color, e.g., \protect\colorbox[HTML]{C8EBF6}{62.34$\pm$0.66}. We run all models 5 times and report the average results and standard deviation. Results are expressed as percentages (\%).}
\vspace{-0.5em}
\label{tab:main_results}
\resizebox{\textwidth}{!}{
\begin{tabular}{ccccccccccc}
    \toprule
    \toprule
    Dataset & \multicolumn{5}{c}{$\bm{Book \longrightarrow Movie}$} & \multicolumn{5}{c}{$\bm{Movie \longrightarrow Book}$} \\ \cmidrule(lr){2-6} \cmidrule(lr){7-11}
    Methods & HR@3 & NDCG@3 & HR@5 & NDCG@5 & MRR & HR@3 & NDCG@3 & HR@5 & NDCG@5 & MRR \\ \midrule
    GRU4Rec & 25.75$\pm$2.12 & 16.52$\pm$0.56 & 38.60$\pm$1.34 & 28.82$\pm$0.87 & 22.24$\pm$1.90 & 27.52$\pm$0.45 & 18.27$\pm$2.32 & 41.16$\pm$1.11 & 27.91$\pm$0.99 & 23.84$\pm$1.54 \\
    SASRec & 28.42$\pm$0.98 & 20.21$\pm$1.65 & 43.57$\pm$2.45 & 26.42$\pm$1.23 & 20.82$\pm$0.67 & 30.71$\pm$2.76 & 21.27$\pm$1.09 & 47.66$\pm$0.34 & 28.23$\pm$1.87 & 21.89$\pm$2.21 \\
    $\pi$-Net & 18.32$\pm$0.25 & 14.64$\pm$1.78 & 26.72$\pm$0.99 & 18.08$\pm$1.56 & 15.27$\pm$2.32 & 20.09$\pm$0.88 & 15.77$\pm$1.90 & 28.00$\pm$0.54 & 19.03$\pm$0.76 & 16.09$\pm$2.21 \\
    C$^2$DSR & 25.60$\pm$2.10 & 18.99$\pm$0.34 & 37.80$\pm$1.67 & 24.00$\pm$0.88 & 24.51$\pm$1.23 & 30.36$\pm$1.54 & 23.02$\pm$0.45 & 41.90$\pm$1.89 & 28.16$\pm$2.11 & 28.17$\pm$0.65 \\
    SyNCRec & 28.41$\pm$0.92 & 24.43$\pm$2.45 & 39.22$\pm$1.10 & 28.84$\pm$0.67 & 31.47$\pm$1.76 & 29.27$\pm$2.01 & 25.12$\pm$1.32 & 39.29$\pm$0.98 & 29.62$\pm$0.43 & 32.35$\pm$1.87 \\
    URLLM & 21.33$\pm$2.11 & 16.99$\pm$0.76 & 30.81$\pm$1.43 & 20.14$\pm$0.55 & 23.29$\pm$1.90 & 22.82$\pm$2.22 & 18.23$\pm$0.33 & 31.36$\pm$1.67 & 22.18$\pm$0.88 & 25.29$\pm$2.76 \\
    LLM4CDSR & \cellcolor[HTML]{C8EBF6}62.34$\pm$0.66 & \cellcolor[HTML]{C8EBF6}53.91$\pm$1.78 & \cellcolor[HTML]{C8EBF6}71.60$\pm$2.65 & \cellcolor[HTML]{C8EBF6}57.72$\pm$1.54 & \cellcolor[HTML]{C8EBF6}53.11$\pm$0.32 & \cellcolor[HTML]{C8EBF6}71.75$\pm$2.43 & \cellcolor[HTML]{C8EBF6}63.51$\pm$1.09 & \cellcolor[HTML]{95D4EE}\textbf{79.61$\pm$0.87} & \cellcolor[HTML]{C8EBF6}66.77$\pm$1.98 & \cellcolor[HTML]{C8EBF6}62.48$\pm$0.44 \\
    LLMCDSR & 35.84$\pm$0.76 & 26.36$\pm$2.12 & 49.18$\pm$0.45 & 29.88$\pm$1.34 & 24.26$\pm$1.11 & 33.62$\pm$2.56 & 24.16$\pm$1.09 & 49.59$\pm$0.32 & 31.56$\pm$1.87 & 25.02$\pm$2.32 \\
    X-Cross & 44.92$\pm$1.25 & 42.15$\pm$0.85 & 50.35$\pm$1.45 & 44.38$\pm$1.22 & 42.45$\pm$1.12 & 40.18$\pm$1.10 & 37.65$\pm$0.98 & 46.02$\pm$1.55 & 40.05$\pm$1.32 & 38.09$\pm$0.95 \\
    WeaveRec & 41.33$\pm$2.89 & 38.57$\pm$1.54 & 46.71$\pm$0.65 & 40.77$\pm$2.76 & 38.84$\pm$1.34 & 36.57$\pm$0.23 & 34.03$\pm$1.87 & 42.38$\pm$2.10 & 36.41$\pm$1.12 & 34.47$\pm$0.88 \\
    Data-Merge & 42.51$\pm$0.55 & 39.48$\pm$2.67 & 47.91$\pm$1.45 & 41.69$\pm$0.32 & 39.66$\pm$2.45 & 38.83$\pm$1.99 & 36.24$\pm$0.76 & 44.15$\pm$1.54 & 38.41$\pm$2.22 & 36.55$\pm$1.23 \\
    \method{} & \cellcolor[HTML]{95D4EE}\textbf{80.13$\pm$1.32} & \cellcolor[HTML]{95D4EE}\textbf{78.98$\pm$0.67} & \cellcolor[HTML]{95D4EE}\textbf{82.46$\pm$2.11} & \cellcolor[HTML]{95D4EE}\textbf{79.93$\pm$1.05} & \cellcolor[HTML]{95D4EE}\textbf{79.11$\pm$0.44} & \cellcolor[HTML]{95D4EE}\textbf{73.24$\pm$1.87} & \cellcolor[HTML]{95D4EE}\textbf{72.15$\pm$0.98} & \cellcolor[HTML]{C8EBF6}75.29$\pm$2.56 & \cellcolor[HTML]{95D4EE}\textbf{72.98$\pm$1.34} & \cellcolor[HTML]{95D4EE}\textbf{72.23$\pm$0.56} \\
    \bottomrule
    Dataset & \multicolumn{5}{c}{$\bm{Kitchen \longrightarrow Food}$} & \multicolumn{5}{c}{$\bm{Food \longrightarrow Kitchen}$} \\
    \cmidrule(lr){2-6} \cmidrule(lr){7-11}
    Methods 
    & HR@3 & NDCG@3 & HR@5 & NDCG@5 & MRR 
    & HR@3 & NDCG@3 & HR@5 & NDCG@5 & MRR \\ \midrule
    GRU4Rec 
    & 19.74$\pm$0.78 & 12.82$\pm$1.45 & 31.06$\pm$0.33 & 20.70$\pm$2.01 & 15.00$\pm$1.10
    & 21.58$\pm$0.82 &  14.21$\pm$1.15 & 33.85$\pm$0.43 &  19.07$\pm$2.12 &  14.22$\pm$1.65 \\
    SASRec 
    & 22.39$\pm$1.12 & 16.04$\pm$0.67 & 35.95$\pm$2.45 & 21.56$\pm$1.34 & 16.89$\pm$0.55
    & 22.21$\pm$1.44 & 16.21$\pm$0.56 & 33.53$\pm$2.89 & 20.85$\pm$0.21 & 16.72$\pm$0.98 \\
    $\pi$-Net 
    & 20.33$\pm$2.56 & 15.37$\pm$1.11 & 33.50$\pm$0.76 & 20.71$\pm$2.01 & 16.58$\pm$0.98
    & 18.17$\pm$1.78 & 13.26$\pm$0.56 & 27.50$\pm$2.89 & 17.06$\pm$1.45 & 13.67$\pm$0.43 \\
    C$^2$DSR 
    & 19.85$\pm$1.23 & 14.68$\pm$0.22 & 30.83$\pm$2.67 & 19.18$\pm$1.45 & 21.84$\pm$0.54
    & 20.47$\pm$0.98 & 15.34$\pm$2.32 & 31.32$\pm$1.54 & 19.78$\pm$0.67 & 21.41$\pm$1.89 \\
    SyNCRec 
    & 23.83$\pm$0.89 & 19.47$\pm$1.78 & 35.72$\pm$0.45 & 23.35$\pm$2.34 & 25.95$\pm$1.21
    & 22.82$\pm$2.11 & 18.54$\pm$1.45 & 34.23$\pm$0.66 & 22.17$\pm$2.89 & 25.44$\pm$1.32 \\
    URLLM 
    & 20.63$\pm$1.21 & 17.87$\pm$0.76 & 31.89$\pm$1.99 & 22.12$\pm$0.55 & 20.36$\pm$2.22
    & 19.78$\pm$1.32 & 16.29$\pm$0.98 & 26.86$\pm$1.87 & 21.76$\pm$0.43 & 19.79$\pm$2.56 \\
    LLM4CDSR 
    & \cellcolor[HTML]{C8EBF6}49.27$\pm$1.78 & 40.46$\pm$0.56 & \cellcolor[HTML]{95D4EE}\textbf{60.44$\pm$2.89} & \cellcolor[HTML]{C8EBF6}45.03$\pm$0.99 & 39.95$\pm$1.21
    & \cellcolor[HTML]{C8EBF6}35.58$\pm$0.65 & 26.83$\pm$1.99 & 45.47$\pm$1.11 & \cellcolor[HTML]{C8EBF6}33.13$\pm$0.45 & 29.08$\pm$2.32 \\
    LLMCDSR 
    & 31.90$\pm$2.12 & 25.66$\pm$1.45 & 41.76$\pm$0.54 & 30.12$\pm$2.34 & 25.07$\pm$1.65
    & 33.74$\pm$1.34 & 26.80$\pm$0.76 & \cellcolor[HTML]{C8EBF6}47.89$\pm$2.54 & 32.04$\pm$1.89 & 26.90$\pm$0.88 \\
    X-Cross 
    & 43.15$\pm$1.34 & 40.22$\pm$1.15 & 48.85$\pm$1.56 & 42.55$\pm$1.89 & 40.50$\pm$1.05 
    & 32.12$\pm$1.12 & \cellcolor[HTML]{C8EBF6}29.45$\pm$0.98 & 38.05$\pm$1.45 & 31.85$\pm$1.23 & 29.15$\pm$1.15 \\
    WeaveRec 
    & 44.47$\pm$2.21 & \cellcolor[HTML]{C8EBF6}41.60$\pm$1.45 & 50.15$\pm$0.54 & 43.95$\pm$1.98 & \cellcolor[HTML]{C8EBF6}41.91$\pm$0.90
    & 30.43$\pm$0.88 & 26.86$\pm$2.56 & 36.67$\pm$1.32 & 29.40$\pm$0.65 & 27.03$\pm$1.89 \\
    Data-Merge 
    & 40.62$\pm$1.78 & 37.86$\pm$2.34 & 45.79$\pm$1.21 & 39.98$\pm$0.76 & 38.07$\pm$1.65
    & 31.61$\pm$1.45 & 28.91$\pm$0.99 & 37.40$\pm$2.87 & 31.28$\pm$1.56 & \cellcolor[HTML]{C8EBF6}29.30$\pm$0.87 \\
    \method{} 
    & \cellcolor[HTML]{95D4EE}\textbf{55.27$\pm$0.89} & \cellcolor[HTML]{95D4EE}\textbf{53.17$\pm$1.76} & \cellcolor[HTML]{C8EBF6}60.19$\pm$0.44 & \cellcolor[HTML]{95D4EE}\textbf{55.19$\pm$2.32} & \cellcolor[HTML]{95D4EE}\textbf{53.56$\pm$1.11}
    & \cellcolor[HTML]{95D4EE}\textbf{44.71$\pm$2.22} & \cellcolor[HTML]{95D4EE}\textbf{42.26$\pm$0.45} & \cellcolor[HTML]{95D4EE}\textbf{50.96$\pm$1.67} & \cellcolor[HTML]{95D4EE}\textbf{44.82$\pm$2.98} & \cellcolor[HTML]{95D4EE}\textbf{42.84$\pm$1.34} \\
    \bottomrule
    Dataset & \multicolumn{5}{c}{$\bm{Toy \longrightarrow Sport}$} & \multicolumn{5}{c}{$\bm{Sport \longrightarrow Toy}$} \\ \cmidrule(lr){2-6} \cmidrule(lr){7-11}
    Methods & HR@3 & NDCG@3 & HR@5 & NDCG@5 & MRR & HR@3 & NDCG@3 & HR@5 & NDCG@5 & MRR \\ \midrule
    GRU4Rec & 28.59$\pm$1.56 & 22.04$\pm$0.34 & 39.12$\pm$2.11 & 27.44$\pm$1.67 & 23.62$\pm$0.55 & 27.77$\pm$1.12 & 19.97$\pm$0.45 & 41.30$\pm$2.31 & 31.56$\pm$1.05 & 29.01$\pm$0.89 \\
    SASRec & 25.79$\pm$0.23 & 18.34$\pm$1.89 & 40.71$\pm$0.78 & 24.45$\pm$2.45 & 19.15$\pm$1.32 & 31.20$\pm$0.98 & 22.97$\pm$1.21 & 45.93$\pm$2.67 & 29.02$\pm$0.43 & 23.50$\pm$2.01 \\
    $\pi$-Net & 21.84$\pm$1.54 & 16.74$\pm$0.65 & 34.10$\pm$2.87 & 21.82$\pm$1.11 & 17.84$\pm$0.99 & 17.70$\pm$2.34 & 13.34$\pm$0.56 & 26.84$\pm$1.78 & 17.09$\pm$0.88 & 13.91$\pm$1.45 \\
    C$^2$DSR & 26.97$\pm$0.87 & 21.41$\pm$2.12 & 37.11$\pm$1.34 & 25.58$\pm$0.21 & 27.35$\pm$1.98 & 32.18$\pm$1.05 & 24.13$\pm$0.32 & 43.57$\pm$2.54 & 28.83$\pm$1.65 & 28.90$\pm$0.76 \\
    SyNCRec & 28.48$\pm$2.45 & 23.66$\pm$1.67 & 39.00$\pm$0.54 & 25.41$\pm$2.32 & 30.68$\pm$1.10 & 26.78$\pm$0.89 & 21.93$\pm$1.45 & 37.48$\pm$0.22 & 26.31$\pm$2.10 & 27.76$\pm$1.32 \\
    URLLM & 22.76$\pm$1.23 & 17.27$\pm$2.54 & 30.28$\pm$1.89 & 20.31$\pm$0.87 & 21.27$\pm$2.65 & 24.11$\pm$1.12 & 20.28$\pm$0.56 & 33.68$\pm$1.90 & 22.87$\pm$0.33 & 24.82$\pm$1.67 \\
    LLM4CDSR & \cellcolor[HTML]{C8EBF6}62.23$\pm$2.34 & 50.83$\pm$0.89 & \cellcolor[HTML]{C8EBF6}73.40$\pm$1.56 &  55.44$\pm$1.12 & 49.46$\pm$2.01 & \cellcolor[HTML]{C8EBF6}60.74$\pm$1.76 & \cellcolor[HTML]{C8EBF6}52.20$\pm$0.55 & \cellcolor[HTML]{C8EBF6}70.54$\pm$2.34 & \cellcolor[HTML]{C8EBF6}56.25$\pm$0.67 & \cellcolor[HTML]{C8EBF6}51.51$\pm$1.45 \\
    LLMCDSR & 40.98$\pm$1.99 & 29.99$\pm$0.55 & 50.88$\pm$1.23 & 34.91$\pm$2.87 & 29.41$\pm$1.45 & 37.20$\pm$0.67 & 28.42$\pm$2.12 & 53.92$\pm$1.54 & 34.70$\pm$0.90 & 27.44$\pm$1.76 \\
    X-Cross & 53.98$\pm$1.32 & 50.85$\pm$1.87 & 61.50$\pm$1.56 & \cellcolor[HTML]{C8EBF6}55.63$\pm$1.22 & 51.20$\pm$1.65 & 49.85$\pm$1.45 & 45.20$\pm$1.12 & 57.50$\pm$1.34 & 48.35$\pm$0.98 & 45.40$\pm$1.55 \\
    WeaveRec & 53.68$\pm$0.54 & \cellcolor[HTML]{C8EBF6}50.99$\pm$2.56 & 59.95$\pm$1.45 & 53.55$\pm$0.32 & \cellcolor[HTML]{C8EBF6}51.47$\pm$1.89 & 48.58$\pm$2.32 & 43.52$\pm$1.21 & 55.26$\pm$0.65 &  46.29$\pm$1.77 & 43.33$\pm$2.65 \\
    Data-Merge & 51.28$\pm$1.76 & 48.56$\pm$0.88 & 57.86$\pm$2.98 & 51.25$\pm$1.65 & 49.11$\pm$0.77 & 48.65$\pm$1.32 & 43.94$\pm$2.54 & 56.39$\pm$1.09 & 47.13$\pm$0.45 & 44.10$\pm$1.90 \\
    \method{} & \cellcolor[HTML]{95D4EE}\textbf{76.14$\pm$2.32} & \cellcolor[HTML]{95D4EE}\textbf{74.26$\pm$0.43} & \cellcolor[HTML]{95D4EE}\textbf{78.78$\pm$1.56} & \cellcolor[HTML]{95D4EE}\textbf{75.34$\pm$2.87} & \cellcolor[HTML]{95D4EE}\textbf{74.21$\pm$1.21} & \cellcolor[HTML]{95D4EE}\textbf{66.34$\pm$0.98} & \cellcolor[HTML]{95D4EE}\textbf{63.88$\pm$1.67} & \cellcolor[HTML]{95D4EE}\textbf{71.66}$\pm$0.34 & \cellcolor[HTML]{95D4EE}\textbf{66.05$\pm$2.11} & \cellcolor[HTML]{95D4EE}\textbf{64.23$\pm$1.05} \\
    \bottomrule
    \bottomrule
\end{tabular}
}
\end{table*}

\subsection{Results and Discussions}
\subsubsection{Overall Performance Comparison (RQ1)}
Table~\ref{tab:main_results} reports the performance of \method{} and competing baselines across three cross-domain datasets. Key observations are summarized below:

\begin{itemize}[leftmargin=*]\setlength{\itemsep}{-\itemsep}
\item \textbf{\method{} consistently yields the best performance across all six cross-domain tasks.}
The significant improvement over all baselines validates that our sharpness-aware alignment and salience recovery mechanisms successfully mitigate interference and enhance knowledge transfer.

\item \textbf{Traditional CDSR models struggle to outperform single-domain baselines in hybrid user scenarios.}
Their reliance on explicit user overlap leads to ineffective knowledge transfer when co-occurrence signals are sparse, as evidenced by $\pi$-Net and $C^2$DSR often lagging behind SASRec.

\item \textbf{LLM-enhanced methods generally outperform traditional baselines by leveraging semantic reasoning to address the hybrid user scenario.} By utilizing open-world knowledge, these methods effectively bridge domain gaps even without explicit overlap. However, URLLM remains an exception, performing poorly due to its dependence on fragile user retrieval mechanisms.

\item \textbf{Current model merging paradigms are limited by statistical homogenization and rapid saturation.}
Linear aggregation in methods like WeaveRec dilutes distinct domain-specific features into a mediocre Gaussian distribution. In contrast, \method{} reactivates these salient signals through non-linear reparameterization, effectively lifting the performance ceiling.

\end{itemize}

\subsubsection{Scalability Analysis (RQ2)}
We evaluate the scalability of \method{} by incrementally increasing the number of source domains. 
As illustrated in Figure~\ref{fig:saturation}, we conclude that \textbf{\method{} effectively mitigates the performance saturation bottleneck inherent in multi-domain fusion.} 
Specifically, while baseline methods (e.g., WeaveRec) rapidly plateau entering the Saturation Zone after integrating 3-4 domains, \method{} maintains a robust upward trajectory, achieving continuous gains. 
This saturation in baselines stems from the \textit{statistical homogenization} caused by linear aggregation, which tends to dilute high-order cross-domain signals into a mediocre Gaussian distribution.
In contrast, \method{} overcomes this limitation by employing the PSA mechanism to reconstruct the merged parameters into a heavy-tailed distribution, thereby reactivating the salient features essential for sustained knowledge transfer and scalability.

\begin{figure}[t]
    \subfigure[\textbf{Food}]{\includegraphics[width=0.475\linewidth]{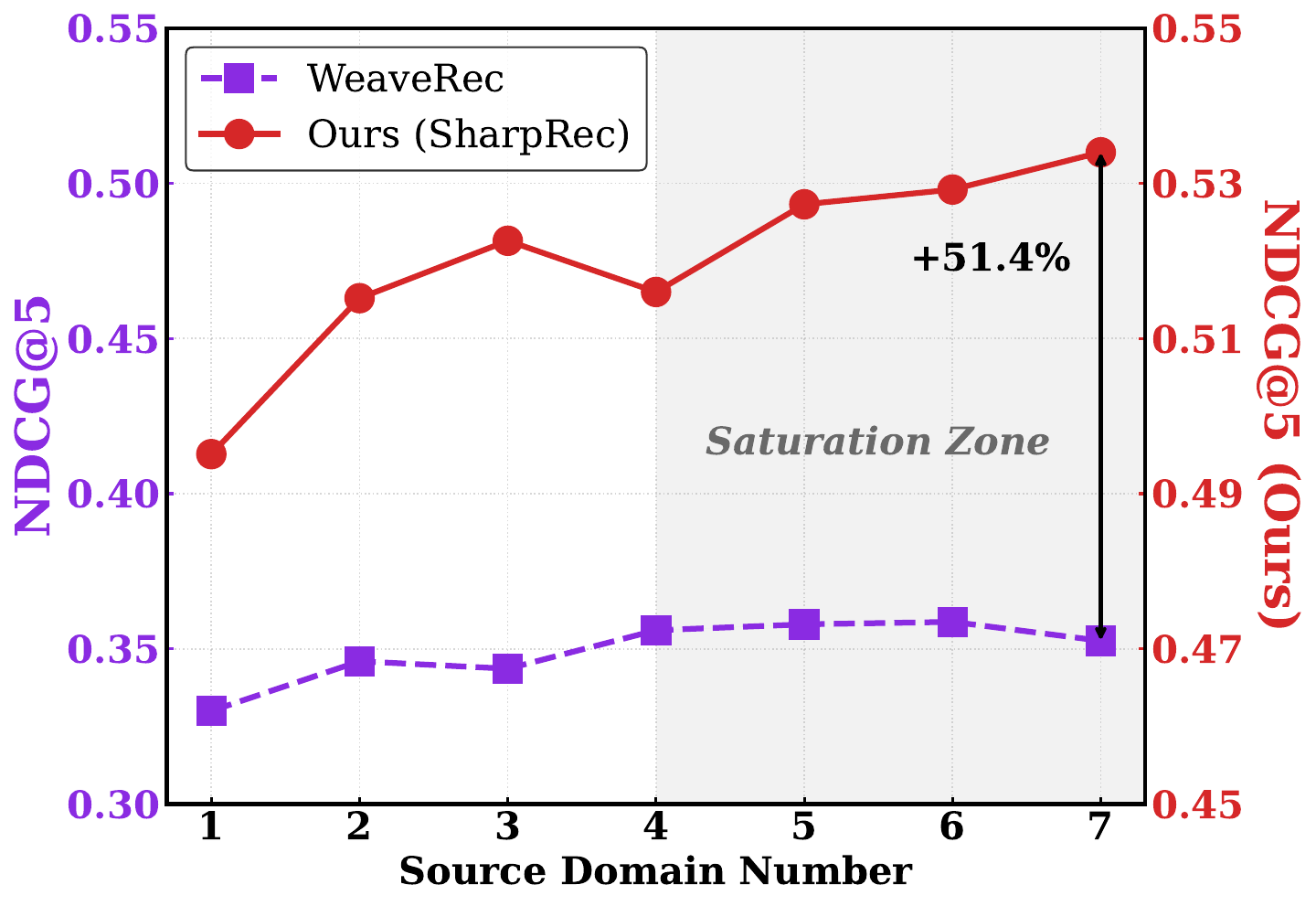}} \label{fig:hp-filmtrust-ns}
    \subfigure[\textbf{Sport}]{\includegraphics[width=0.475\linewidth]{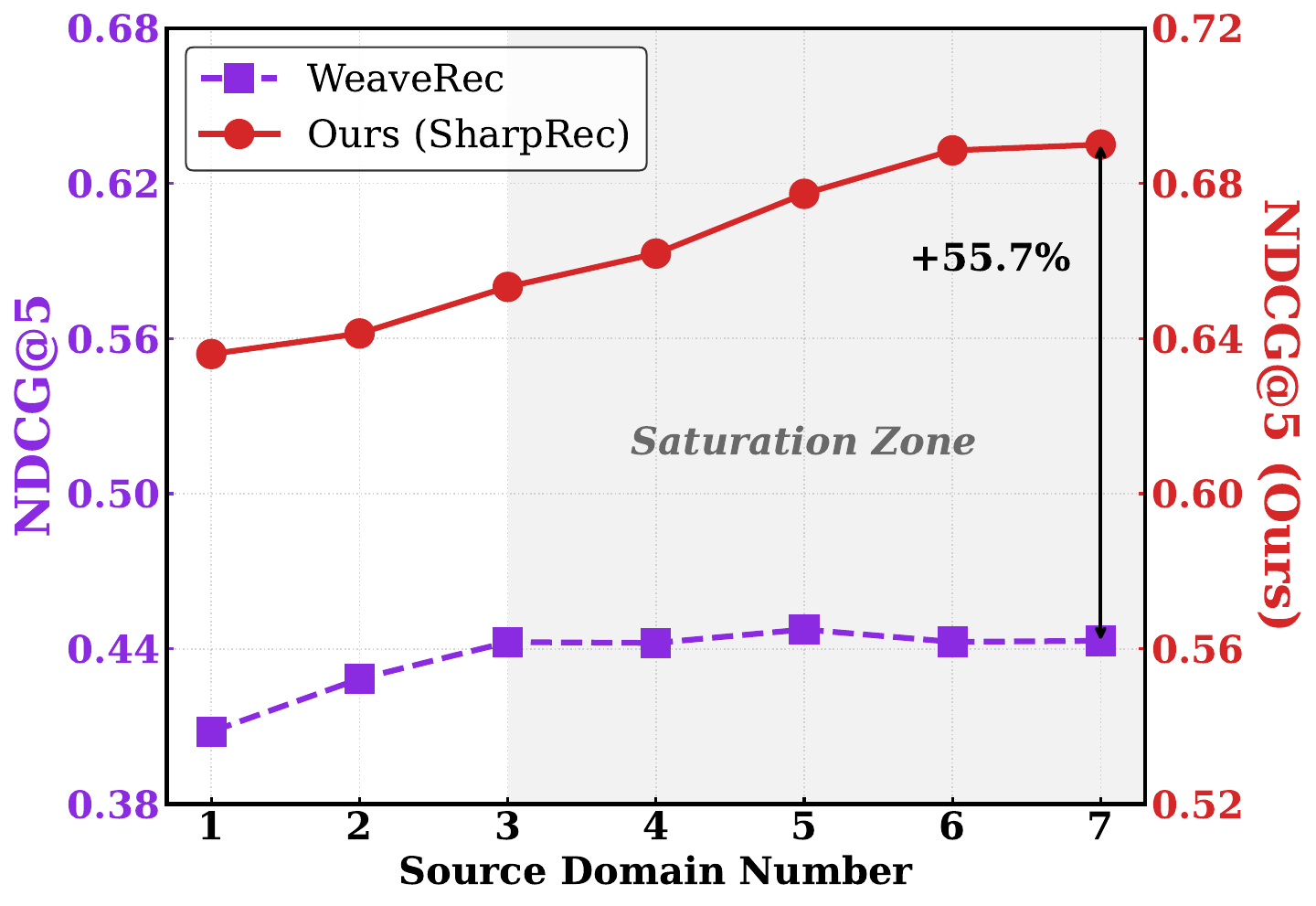}} \label{fig:hp-lastfm-ns}
    \vspace{-1.0em}
    \caption{Performance comparison w.r.t. the number of merged source domains on Food and Sport tasks.}
    \label{fig:saturation}
    \vspace{-1.25em}
\end{figure}

\subsection{Ablation Study}
\subsubsection{Component Analysis (RQ3)} 
To evaluate the individual contributions of our proposed modules, we conduct ablation studies across both dual-domain and multi-domain scenarios. 
As reported in Table~\ref{tab:ablation_gsr_merge}, we compare \method{} with two variants: 
(1) \textbf{w/o SGA}, which removes the sharpness-aware geometric alignment; 
and (2) \textbf{w/o PSA}, which excludes the preference salience activation. 
Our findings are as follows:
\begin{itemize}[leftmargin=*]\setlength{\itemsep}{-\itemsep}
    \item \textbf{SGA serves as the foundation for resolving geometric incompatibility.} 
    The performance degradation observed in the \textit{w/o SGA} variant confirms that direct merging leads to severe parameter interference. 
    By guiding domain-specific models into a unified flat loss landscape, SGA ensures a stable initialization, preventing parameter interference and negative transfer.

    \item \textbf{PSA alleviates statistical homogenization to prevent performance saturation in multi-domain fusion.} 
    While less critical in dual-domain settings, PSA becomes essential as domains increase by preventing merged parameters from converging to a mediocre average that dilutes domain-specific characteristics. 
    It employs non-linear reparameterization to recover heavy-tailed distributions, preserving distinct preference signals required for robust cross-domain transfer.
\end{itemize}

\begin{table}[t]
\centering
\caption{Impact of SGA and PSA on model performance.}
\vspace{-0.5em}
\label{tab:ablation_gsr_merge}
\resizebox{\linewidth}{!}{
\begin{tabular}{lccccc}

\toprule
\toprule
Method & NDCG@3 & HR@3 & NDCG@5 & HR@5 & MRR \\
\midrule
\rowcolor[HTML]{E6E6FA} 
\multicolumn{6}{c}{$\bm{Sport \longrightarrow Toy}$} \\
\method{} & \textbf{0.6388} & \textbf{0.6634} & \textbf{0.6605} & \textbf{0.7166} & \textbf{0.6423} \\
w/o SGA   & 0.4564 & 0.5014 & 0.4817 & 0.5632 & 0.4548 \\
w/o PSA   &0.6110 & 0.6442 & 0.6290 & 0.6889 & 0.6094 \\
\rowcolor[HTML]{E6E6FA}
\midrule
\multicolumn{6}{c}{$\bm{Book, Movie, Kitchen, Toy, Food, Cloth \longrightarrow Sport}$} \\
\method{} & \textbf{0.6757} & \textbf{0.7344} & \textbf{0.6900} & \textbf{0.7638} & \textbf{0.6749} \\
w/o SGA   & 0.4384 & 0.5094 & 0.4673 & 0.5500 & 0.4451 \\
w/o PSA   & 0.6328 & 0.6959 & 0.6485 & 0.7219 & 0.6295 \\
\bottomrule
\bottomrule
\end{tabular}
}
\end{table}

\subsubsection{Cross-domain Gain Analysis (RQ4)} 
To verify the effectiveness of cross-domain knowledge transfer via model merging, we compare \method{} with single-domain baselines. As shown in Table~\ref{tab:ablation_sharp_single_domain}, we conclude that \textbf{\method{} consistently enhances recommendation performance across all target domains compared to single-domain baselines.} 
Results demonstrate substantial performance lift in every CDSR task, validating the fundamental premise of CDSR that leveraging auxiliary domain interactions provides complementary preference signals inaccessible to isolated single-domain models.
This superiority confirms that \method{} effectively mitigates the negative transfer often seen in vanilla merging, employing sharpness-aware alignment and salience recovery to filter interference while distilling synergistic knowledge.

\begin{table}[t]
\centering
\caption{Performance comparison of \method{} and single-domain baselines under different domain transfer settings.}
\vspace{-0.5em}
\label{tab:ablation_sharp_single_domain}
\resizebox{\linewidth}{!}{
\begin{tabular}{llccccc}
\toprule
\toprule
Domain & Method & NDCG@3 & HR@3 & NDCG@5 & HR@5 & MRR \\
\midrule
\rowcolor[HTML]{E6E6FA}
\multicolumn{7}{c}{$\bm{Book \longleftrightarrow Movie}$} \\
\multirow{2}{*}{$\bm{Book}$}
 & Book-only & 0.7128 & 0.7253 & 0.7199 & 0.7425 & 0.7125 \\
 & \method{} & \textbf{0.7215} & \textbf{0.7324} & \textbf{0.7298} & \textbf{0.7529} & \textbf{0.7223} \\
\multirow{2}{*}{$\bm{Movie}$}
 & Movie-only & 0.7827 & 0.7929 & 0.7921 & 0.8158 & 0.7844 \\
 & \method{}  & \textbf{0.7898} & \textbf{0.8013} & \textbf{0.7993} & \textbf{0.8246} & \textbf{0.7911} \\
\midrule
\rowcolor[HTML]{E6E6FA}
\multicolumn{7}{c}{$\bm{Kitchen \longleftrightarrow Food}$} \\
\multirow{2}{*}{$\bm{Kitchen}$}
 & Kitchen-only & 0.3988 & 0.4291 & 0.4192 & 0.4792 & 0.3996 \\
 & \method{}   & \textbf{0.4226} & \textbf{0.4471} & \textbf{0.4482} & \textbf{0.5096} & \textbf{0.4284} \\
\multirow{2}{*}{$\bm{Food}$}
 & Food-only & 0.5171 & 0.5380 & 0.5377 & 0.5887 & 0.5211 \\
 & \method{} & \textbf{0.5317} & \textbf{0.5527} & \textbf{0.5519} & \textbf{0.6019} & \textbf{0.5356} \\
\midrule
\rowcolor[HTML]{E6E6FA}
\multicolumn{7}{c}{$\bm{Sport \longleftrightarrow Toy}$} \\
\multirow{2}{*}{$\bm{Sport}$}
 & Sport-only & 0.7263 & 0.7428 & 0.7350 & 0.7645 & 0.7254 \\
 & \method{} & \textbf{0.7426} & \textbf{0.7614} & \textbf{0.7534} & \textbf{0.7878} & \textbf{0.7421} \\
\multirow{2}{*}{$\bm{Toy}$}
 & Toy-only & 0.6015 & 0.6300 & 0.6222 & 0.6804 & 0.6032 \\
 & \method{} & \textbf{0.6388} & \textbf{0.6634} & \textbf{0.6605} & \textbf{0.7166} & \textbf{0.6423} \\
\bottomrule
\bottomrule
\end{tabular}
}

\end{table}

\subsection{Robustness to Overlap Ratios (RQ5)} 
To evaluate whether \method{} maintains stable performance under varying data constraints, we investigate its robustness by adjusting the ratio of overlapping users from 20\% to 80\%. 
\textbf{The experimental results demonstrate that \method{} consistently achieves superior and stable performance regardless of the user overlap ratio, effectively circumventing the overlap dilemma.}
As illustrated in Figure~\ref{fig:overlap_exp}, while traditional methods like $C^2$DSR exhibit notable performance degradation at low overlap ratios (e.g., 20\%) due to their heavy reliance on dense co-occurrence signals, \method{} maintains a substantial lead across all settings.
This robustness confirms that by leveraging semantic alignment and resolving geometric incompatibility in the parameter space, our framework ensures effective knowledge transfer even in scenarios with minimal user overlap, decoupling performance from cross-domain data density.

\begin{figure}[t] 
    \centering
    \includegraphics[width=0.8\linewidth]{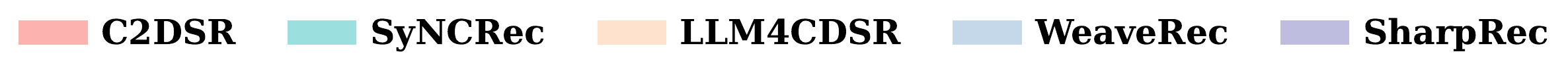} \\

    \subfigure[Book Domain]{\includegraphics[width=0.48\linewidth]{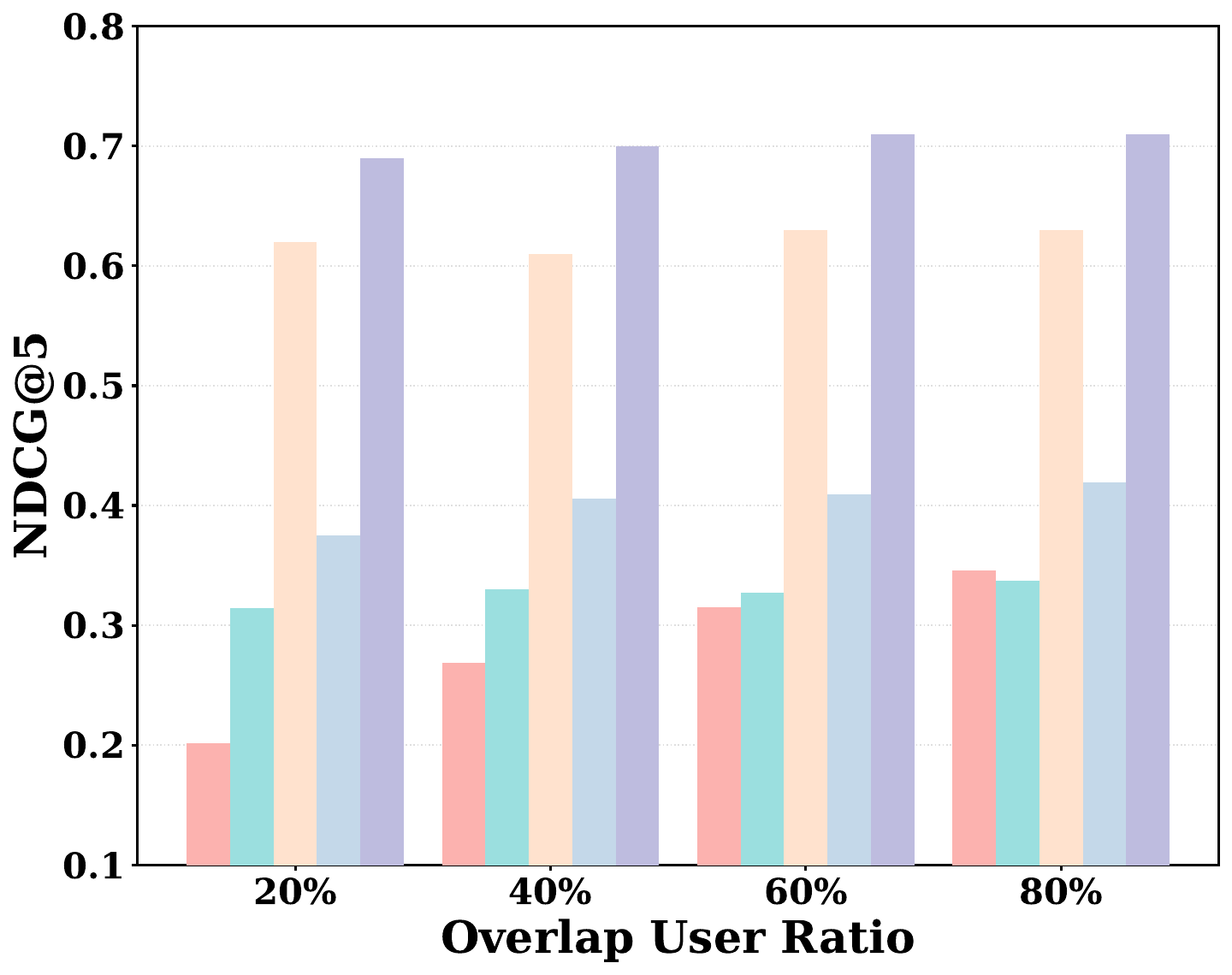}}
    \hfill 
    \subfigure[Movie Domain]{\includegraphics[width=0.48\linewidth]{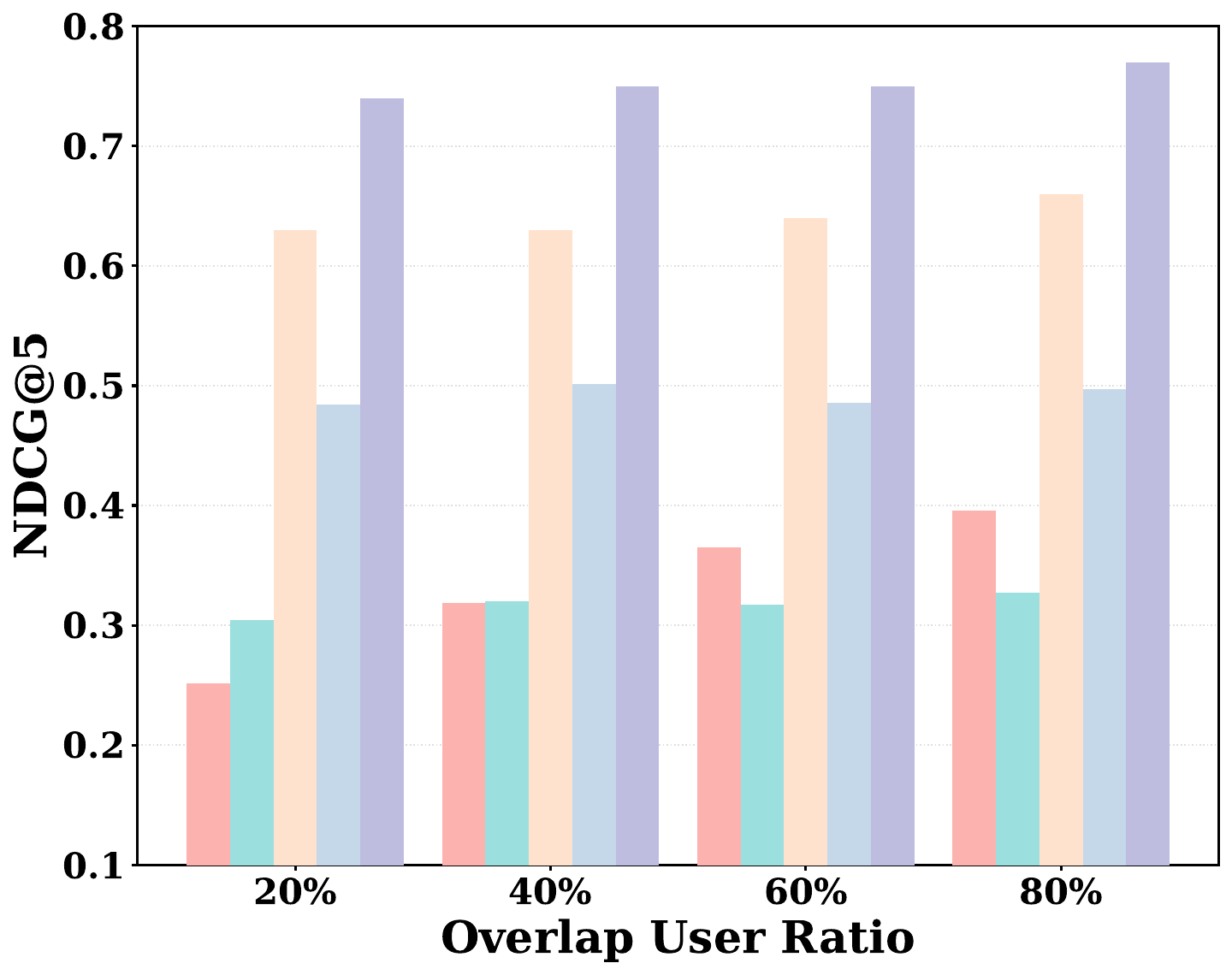}}
    \caption{Impact of user overlap ratios (ranging from 20\% to 80\%) on recommendation performance (NDCG@5).}
    \label{fig:overlap_exp}
    \vspace{-1.0em}
\end{figure}

\subsection{Hyper-parameter Sensitivity (RQ6)}
\label{sec:sensitivity}
To investigate the robustness of \method{}, we conduct sensitivity analyses on three key hyperparameters: the perturbation radius $\rho$, the salience activation factor $\gamma$, and the noise variance $\sigma_g$. 
As illustrated in Figure ~\ref{fig:hyper-para}, our observations are as follows:
%
%

\begin{itemize}[leftmargin=*]\setlength{\itemsep}{-\itemsep}
    \item \textbf{Impact of Perturbation Radius $\rho$.} 
    \method{} exhibits robust performance within the range of $[0.001, 0.05]$. 
    Values below this range fail to propel the model out of domain-specific sharp minima, hindering effective alignment in the shared preference space. Conversely, $\rho > 0.05$ introduces excessive perturbation that disrupts the semantic coherence required for cross-domain transfer.

    \item \textbf{Effect of Salience Factor $\gamma$.} 
    Performance peaks as $\gamma$ approaches $[0.94, 0.98]$, confirming that a moderate degree of reparameterization is optimal for preserving the preference manifold. 
    Deviating from this range (e.g., $\gamma < 0.90$) leads to over-amplification, which distorts the structure of transferable user interests and compromises the fidelity of the merged model.

    \item \textbf{Sensitivity of Noise Variance $\sigma_g$.} 
    Small-scale Gaussian noise acts as an effective regularizer, facilitating the learning of domain-invariant representations during salience activation. 
    However, once $\sigma_g$ surpasses a critical threshold (e.g., $\sigma_g > 0.01$), the excessive variance destabilizes the feature alignment between domains, resulting in a consistent drop in accuracy.
\end{itemize}

\begin{figure}[t]
    \centering
    \vspace{-0.25em}
    \includegraphics[width=0.35\linewidth]{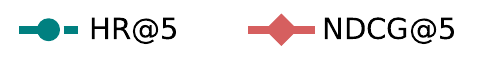} \\
    \vspace{-0.5em}

    \subfigure[$\rho$, Book Domain]
    {\includegraphics[width=0.235\textwidth]{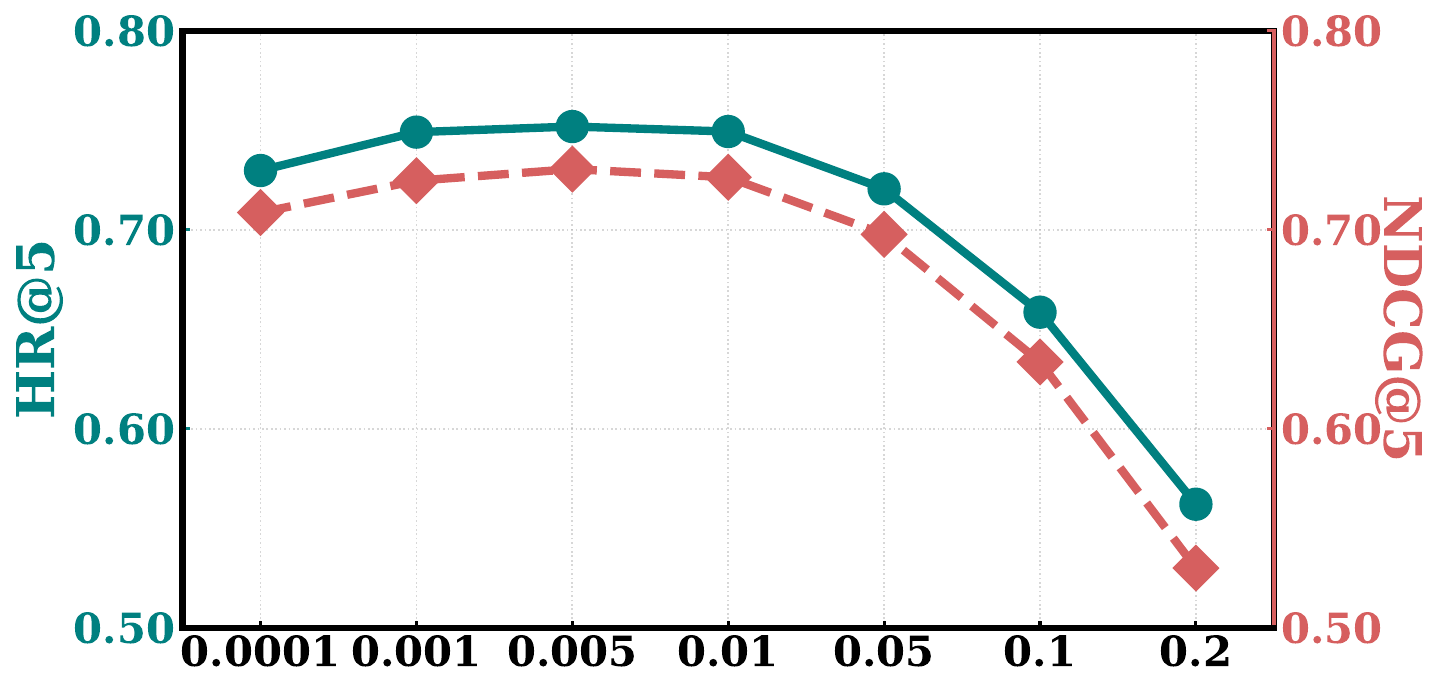}}
    \hfill
    \subfigure[$\rho$, Movie Domain]{\includegraphics[width=0.235\textwidth]{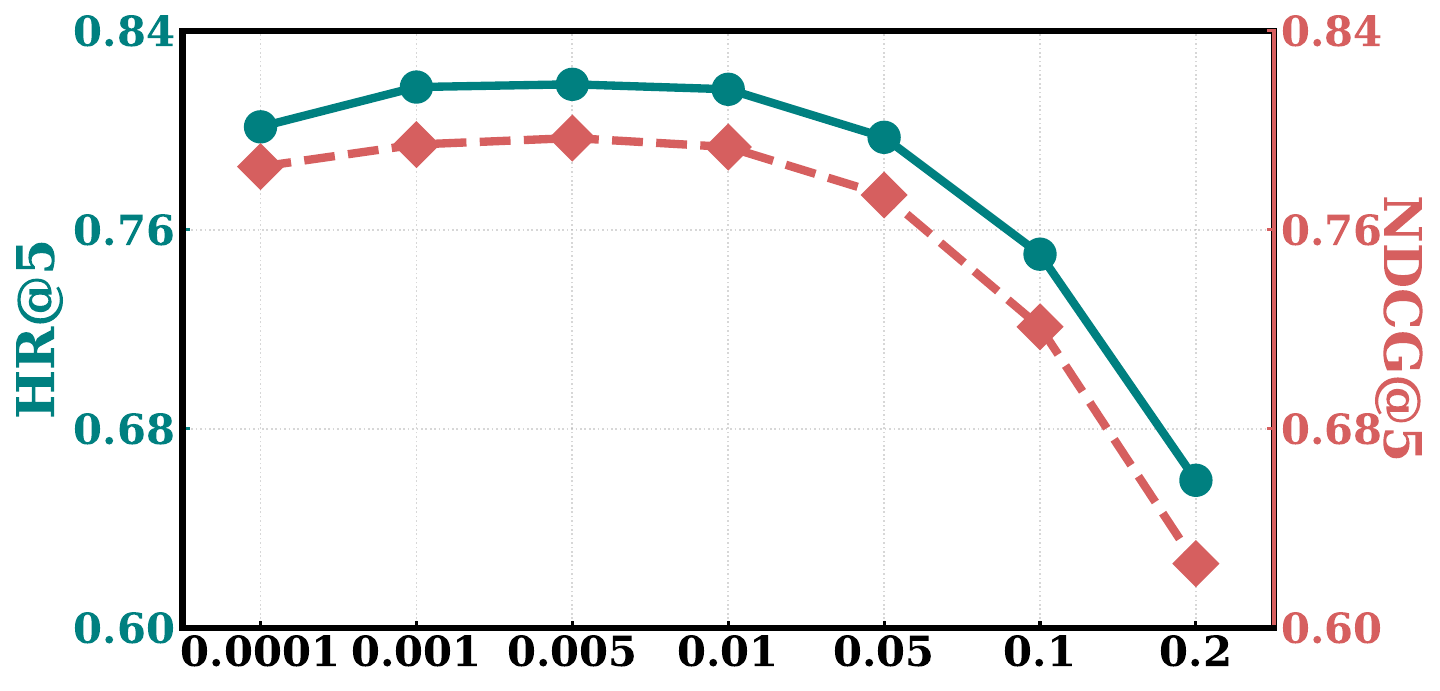}}
    \hfill
    \subfigure[$\gamma$, Book Domain]
    {\includegraphics[width=0.235\textwidth]{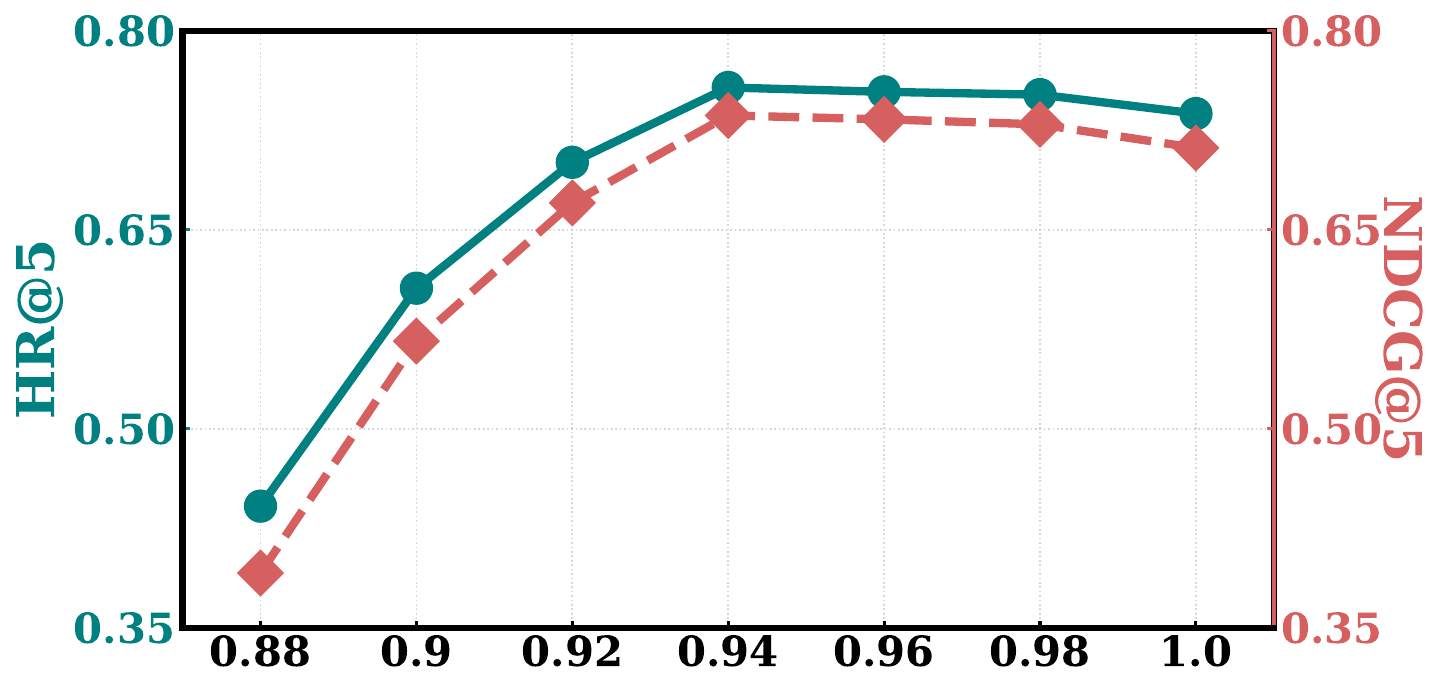}}
    \hfill
    \subfigure[$\gamma$, Movie Domain]{\includegraphics[width=0.235\textwidth]{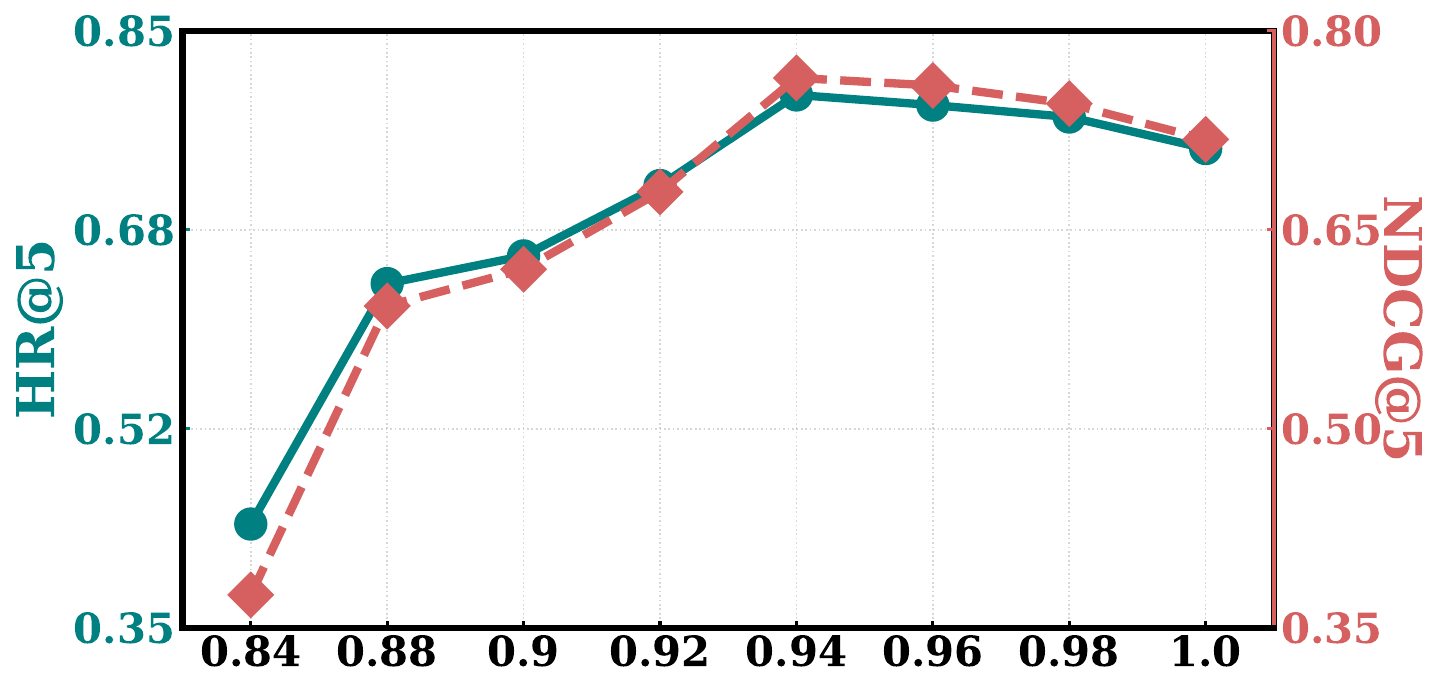}}
    \hfill
    \subfigure[$\sigma_g$, Book Domain]
    {\includegraphics[width=0.235\textwidth]{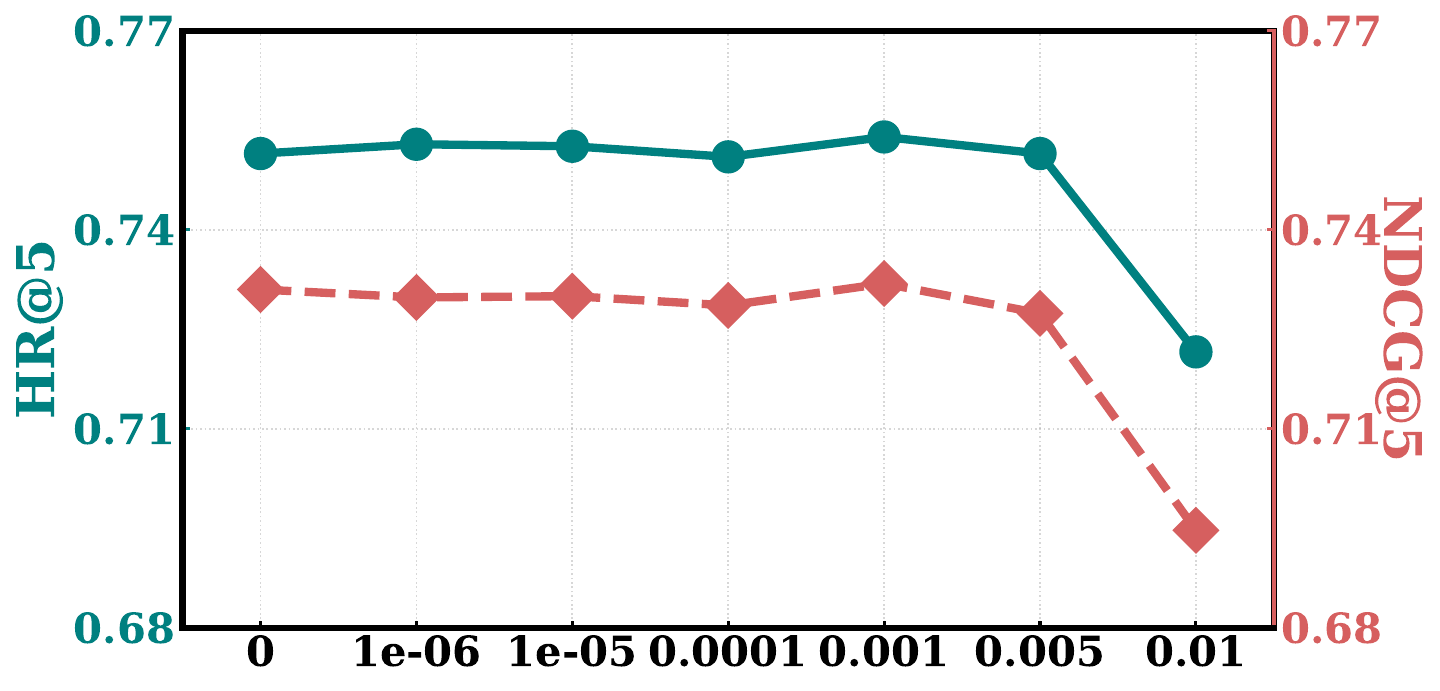}}
    \hfill
    \subfigure[$\sigma_g$, Movie Domain]{\includegraphics[width=0.235\textwidth]{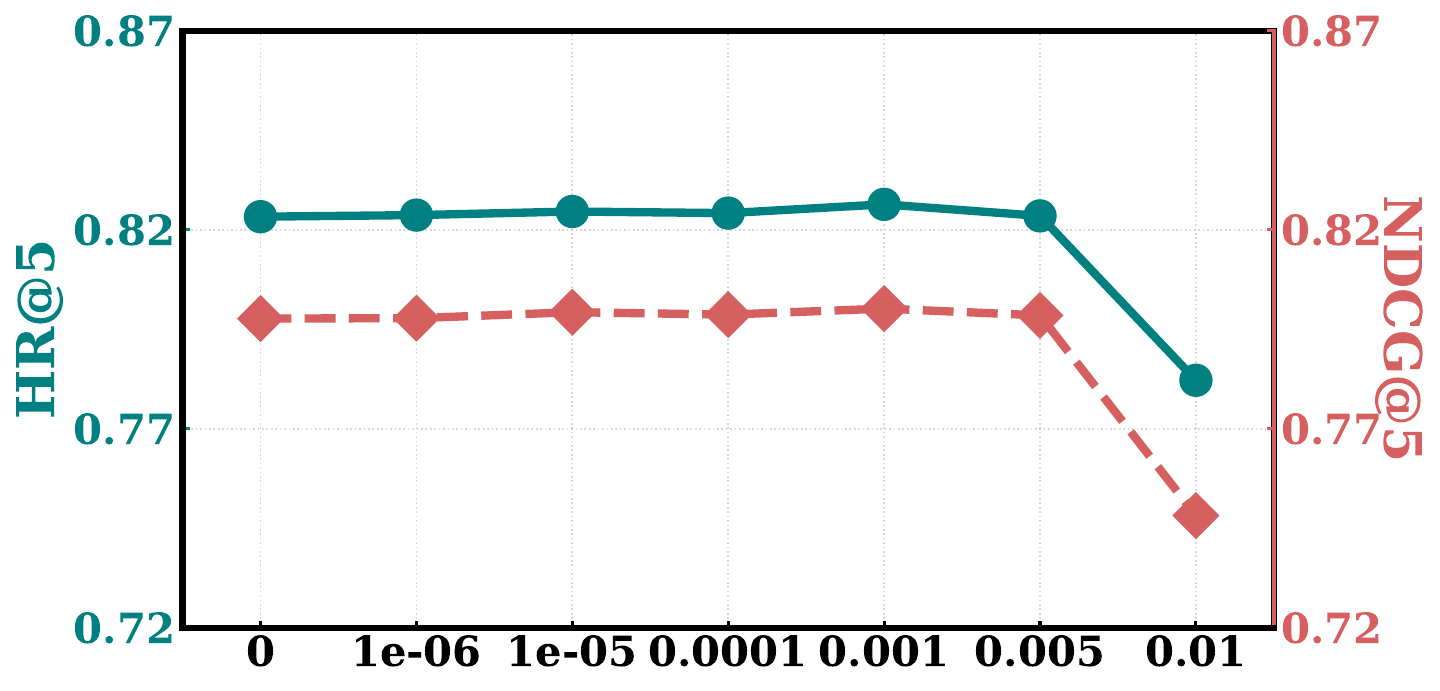}}
    \hfill
    \caption{Impact of hyperparameters $\rho$, $\gamma$, and $\sigma_g$ on recommendation performance (HR@5 and NDCG@5).}
    \label{fig:hyper-para}
\end{figure}

%% file: body/6-con.tex
\section{Conclusion}
In this work, we explore the potential of model merging for CDSR. 
Our empirical studies identify two critical bottlenecks in existing paradigms, i.e., cross-domain knowledge conflict arising from geometric incompatibility and performance saturation caused by statistical homogenization. 
To address these bottlenecks, we proposed \method{}, a novel framework designed to lift the performance upper bound of CDSR. 
Specifically, \method{} leverages Sharpness-aware Geometric Alignment to resolve parameter interference via connected flat minima, and Preference Salience Activation to reactivate salient signals by reconstructing heavy-tailed distributions.
Extensive experiments verify that \method{} effectively mitigates cross-domain conflicts and alleviates performance saturation, significantly enhancing target domain recommendation performance while enabling continuous scalability as source domains increase.
Our work provides a robust and scalable solution for LLM-based CDSR, paving the way for more efficient utilization of open-world knowledge in recommender systems.

%% file: body/app.tex
\section{Derivation of the SGA Objective}
\label{app:sga_derivation}

\textit{Adapted from the proof of Eq.~(6) in~\citep{lee2025mitigating} (Appendix~B),
reformulated for the cross-domain LoRA adapter merging setting in CDSR.}

\noindent\textbf{Simplification of the Optimization Objective.} 
We start with the proposed SGA objective function for a domain-specific adapter $\theta_k$, which aims to simultaneously minimize the empirical loss and the merging interference:
\begin{equation}
    \mathcal{J}_{\text{SGA}}(\theta_k) = \underbrace{\left( \mathcal{L}(\Phi, \theta_k + \Delta \theta; \mathcal{D}_k) - \mathcal{L}(\Phi, \theta_k; \mathcal{D}_k) \right)}_{\text{Merging Interference Resistance}} + \underbrace{\mathcal{L}(\Phi, \theta_k; \mathcal{D}_k)}_{\text{Domain-Specific Accuracy}}.
\end{equation}
By performing algebraic simplification, the term $\mathcal{L}(\Phi, \theta_k; \mathcal{D}_k)$ cancels out, reducing the objective to minimizing the loss at the shifted parameter state:
\begin{equation}
\begin{aligned}
    \mathcal{J}_{\text{SGA}}(\theta_k) &= \mathcal{L}(\Phi, \theta_k + \Delta \theta; \mathcal{D}_k) - \mathcal{L}(\Phi, \theta_k; \mathcal{D}_k) + \mathcal{L}(\Phi, \theta_k; \mathcal{D}_k) \\
    &= \mathcal{L}(\Phi, \theta_k + \Delta \theta; \mathcal{D}_k).
\end{aligned}
\label{eq:simplified_obj}
\end{equation}
This reveals that minimizing interference is equivalent to optimizing the model's performance under the parameter shift $\Delta \theta$ induced by the merging process.

\noindent\textbf{Modeling the Parameter Shift $\Delta \theta$.} 
In linear model merging, the aggregated parameters are $\theta_{merge} = \sum_{j=1}^{K} \lambda_j \theta_j$. We decompose this sum relative to the current domain $\theta_k$ to identify the shift $\Delta \theta$:
\begin{equation}
\begin{split}
    \Delta \theta & = \theta_{merge} - \theta_k \\
                  & = \left( \lambda_k \theta_k + \sum_{j \neq k} \lambda_j \theta_j \right) - \theta_k \\
                  & = (\lambda_k - 1)\theta_k + \sum_{j \neq k} \lambda_j \theta_j
    \label{eq:shift_expansion}.
\end{split}
\end{equation}
Here, $\Delta \theta$ comprises a scaling residual of the current adapter and a linear superposition of peer domain adapters.

\noindent\textbf{Equivalence to Min-Max Optimization.} 
During the independent fine-tuning of domain $k$, the parameters of other domains $\{\theta_j\}_{j \neq k}$ are unknown and dynamic. Consequently, the term $\sum_{j \neq k} \lambda_j \theta_j$ in Eq. (\ref{eq:shift_expansion}) acts as an unpredictable stochastic variable.

To guarantee robustness against \textit{any} potential configuration of other domains, we treat $\Delta \theta$ as a perturbation vector $\boldsymbol{\epsilon}$ bounded within a local geometric region. We reformulate the optimization as a Min-Max problem, seeking the parameter $\theta_k$ that minimizes the loss under the worst-case shift within a Euclidean ball of radius $\rho$:
\begin{equation}
    \min_{\theta_k} \mathcal{L}(\Phi, \theta_k + \Delta \theta; \mathcal{D}_k) \implies \min_{\theta_k} \max_{\|\boldsymbol{\epsilon}\|_2 \leq \rho} \mathcal{L}(\Phi, \theta_k + \boldsymbol{\epsilon}; \mathcal{D}_k).
\end{equation}
This demonstrates that minimizing the merging interference (Eq. \ref{merging_obj}) is mathematically equivalent to seeking a flat minimum via Sharpness-Aware Minimization.

\section{Proof of Theorem~\ref{theorem_1}}
\label{app:interference_bound}

\textit{Following the Taylor-expansion bounding technique of~\citep{lee2025mitigating}
(Theorem~1, Appendix~D), reformulated for the CDSR setting.}

\noindent\textbf{Setup and Definitions. }
We aim to bound the interference error $\delta$:
\begin{equation}
    \delta = \mathcal{L}(\lambda \theta_A + (1-\lambda)\theta_B) - \left[ \lambda \mathcal{L}(\theta_A) + (1-\lambda)\mathcal{L}(\theta_B) \right].
\end{equation}
Let $\theta_{merge} = \lambda \theta_A + (1-\lambda)\theta_B$. We assume the domain experts $\theta_A$ and $\theta_B$ have converged to local minima, implying negligible first-order gradients: $\nabla \mathcal{L}(\theta_A) \approx 0$ and $\nabla \mathcal{L}(\theta_B) \approx 0$.

\noindent\textbf{Quadratic Expansion. }
We perform a second-order Taylor expansion of $\mathcal{L}(\theta_{merge})$ around $\theta_A$. Noting that the displacement vector is $\theta_{merge} - \theta_A = (1-\lambda)(\theta_B - \theta_A)$, we have:
\begin{equation}
    \mathcal{L}(\theta_{merge}) \approx \mathcal{L}(\theta_A) + \frac{1}{2} (1-\lambda)^2 (\theta_A - \theta_B)^\top \mathbf{H}_A (\theta_A - \theta_B),
\end{equation}
where $\mathbf{H}_A = \nabla^2 \mathcal{L}(\theta_A)$. Rearranging terms yields the loss difference relative to $\theta_A$:
\begin{equation}
    \mathcal{L}(\theta_{merge}) - \mathcal{L}(\theta_A) \approx \frac{1}{2} (1-\lambda)^2 (\theta_A - \theta_B)^\top \mathbf{H}_A (\theta_A - \theta_B).
\end{equation}
By symmetry, the expansion around $\theta_B$ with displacement $\theta_{merge} - \theta_B = -\lambda(\theta_A - \theta_B)$ yields:
\begin{equation}
    \mathcal{L}(\theta_{merge}) - \mathcal{L}(\theta_B) \approx \frac{1}{2} \lambda^2 (\theta_A - \theta_B)^\top \mathbf{H}_B (\theta_A - \theta_B).
\end{equation}

\noindent\textbf{Deriving the Upper Bound. }
Substituting these quadratic forms back into the expression for $\delta$:
\begin{equation}
\begin{aligned}
    \delta \approx &\lambda \left[ \frac{1}{2} (1-\lambda)^2 (\theta_A - \theta_B)^\top \mathbf{H}_A (\theta_A - \theta_B) \right] \\
           &+ (1-\lambda) \left[ \frac{1}{2} \lambda^2 (\theta_A - \theta_B)^\top \mathbf{H}_B (\theta_A - \theta_B) \right].
\end{aligned}
\end{equation}
Factorizing common terms $\frac{1}{2}\lambda(1-\lambda)$:
\begin{equation}
\begin{split}
    \delta \approx \frac{1}{2}\lambda(1-\lambda) \Bigl[ &(1-\lambda) (\theta_A - \theta_B)^\top \mathbf{H}_A (\theta_A - \theta_B) \\
    &+ \lambda (\theta_A - \theta_B)^\top \mathbf{H}_B (\theta_A - \theta_B) \Bigr].
\end{split}
\end{equation}
To bound this scalar value, we utilize the spectral norm of the Hessian, denoted as $\sigma(\theta) = \lambda_{max}(\mathbf{H})$. By the definition of the Rayleigh quotient, for any vector $v$, $v^\top \mathbf{H} v \leq \lambda_{max}(\mathbf{H}) \|v\|^2$. Applying this property:
\begin{equation}
    |\delta| \leq \frac{1}{2}\lambda(1-\lambda) \|\theta_A - \theta_B\|^2 \left[ (1-\lambda)\sigma(\theta_A) + \lambda\sigma(\theta_B) \right] + \mathcal{O}(\epsilon).
\end{equation}
Since $\lambda \in [0,1]$, the convex combination of sharpness terms is strictly bounded by their sum: $(1-\lambda)\sigma(\theta_A) + \lambda\sigma(\theta_B) \leq \sigma(\theta_A) + \sigma(\theta_B)$.
This yields the final bound presented in Theorem ~\ref{theorem_1}:
\begin{equation}
    |\delta| \leq \frac{1}{2}\lambda(1-\lambda) \underbrace{\left( \sigma(\theta_A) + \sigma(\theta_B) \right)}_{\text{Sharpness}} \cdot \underbrace{\|\theta_A - \theta_B\|^2}_{\text{Domain Divergence}} + \mathcal{O}(\epsilon).
\end{equation}
This concludes the proof.

\section{Proof of Heavy-Tailed Distribution Induction}
\label{app:heavy_tail_proof}

\textit{Adapted from the proof of Theorem~5 in~\citep{wang2025more}(Appendix~C.5), reframed for the PSA module in \method{}.}

\noindent\textbf{Gaussian Property of Disentangled Parameters. }
Let the aggregated parameters $\theta_{merge}$ follow a multivariate Gaussian distribution $\theta_{merge} \sim \mathcal{N}(\mu, \Sigma)$. We introduce an independent Gaussian noise variable $G \sim \mathcal{N}(0, \sigma_g^2 I)$. The disentangled representation is defined as $\tilde{\theta} = \theta_{merge} - G$.
Since the linear combination of independent Gaussian vectors remains Gaussian, we have:
\begin{equation}
    \tilde{\theta} \sim \mathcal{N}(\mu, \Sigma + \sigma_g^2 I).
\end{equation}
Assuming a simplified diagonal covariance for element-wise analysis, each component $x \in \tilde{\theta}$ follows $\mathcal{N}(0, \sigma^2)$, with the probability density function:
\begin{equation}
    p_{\tilde{\theta}}(x) = \frac{1}{\sqrt{2\pi}\sigma} \exp\left( -\frac{x^2}{2\sigma^2} \right).
\end{equation}

\noindent\textbf{Non-linear Transformation and Asymptotic Behavior. }
We analyze the distribution of the transformed parameter $\theta_{PSA}$ after applying the element-wise projection $T(\cdot)$. Let $y = T(x)$ denote a component of $\theta_{PSA}$. The transformation is given by:
\begin{equation}
    y = T(x) = \text{sign}(x) \cdot |x|^\gamma \cdot \left( 1 + \alpha e^{-\beta |x|} \right).
\end{equation}
In the tail region where $|x|$ is large, the term $e^{-\beta |x|}$ decays rapidly to zero. Thus, the transformation implies the following asymptotic relationship and its inverse:
\begin{equation}
    y \approx \text{sign}(x) \cdot |x|^\gamma \implies x \approx \text{sign}(y) \cdot |y|^{1/\gamma}.
\end{equation}

\noindent\textbf{Formulation of the Resulting Distribution. }
We derive probability density function of $y$, denoted as $p_{\theta_{PSA}}(y)$, using the change of variables formula:
\begin{equation}
    p_{\theta_{PSA}}(y) = p_{\tilde{\theta}}(x) \cdot \left| \frac{dx}{dy} \right|.
\end{equation}
Using the asymptotic inverse $|x| \approx |y|^{1/\gamma}$, the Jacobian of the transformation is:
\begin{equation}
    \left| \frac{dx}{dy} \right| \approx \frac{d}{dy} \left( |y|^{1/\gamma} \right) = \frac{1}{\gamma} |y|^{\frac{1}{\gamma} - 1}.
\end{equation}
Substituting the Gaussian PDF $p_{\tilde{\theta}}(x)$ and the Jacobian term into the change of variables equation:
\begin{equation}
    p_{\theta_{PSA}}(y) \approx \frac{1}{\sqrt{2\pi}\sigma} \exp\left( -\frac{(|y|^{1/\gamma})^2}{2\sigma^2} \right) \cdot \frac{1}{\gamma} |y|^{\frac{1}{\gamma} - 1}.
\end{equation}
Simplifying the expression yields the asymptotic distribution for $\theta_{PSA}$:
\begin{equation}
    p_{\theta_{PSA}}(y) \propto |y|^{\frac{1}{\gamma} - 1} \exp\left( -\frac{|y|^{2/\gamma}}{2\sigma^2} \right).
\end{equation}
For $0 < \gamma < 1$, this density function belongs to the family of Generalized Error Distributions (GED). The parameter $\gamma$ controls the shape parameter $p = 2/\gamma$, allowing the PSA module to explicitly reshape the statistical properties of the aggregated model parameters beyond the original Gaussian limitations.

\section{Proof of Theorem ~\ref{theorem_2}}
\label{app:coverage_proof}

\textit{Adapted from the proof of Theorem~6 in~\citep{wang2025more}(Appendix~C.6), reframed for the PSA coverage analysis in \method{}.}

\noindent\textbf{Parameter Space Partition and Sensitivity. }
We define the functional coverage $\mathcal{C}$ as the volume of the function space accessible by the parameter distribution, weighted by the Jacobian determinant:
\begin{equation}
    \mathcal{C} = \int_{\mathcal{W}} |\det(J_{\Phi}(\mathbf{w}))| p(\mathbf{w}) d\mathbf{w},
\end{equation}
where $J_{\Phi}(\mathbf{w})$ is the Jacobian matrix of the mapping $\Phi: \mathcal{W} \rightarrow \mathcal{F}$.

We consider the parameter space $\mathcal{W}$ as the union of two disjoint regions: a high-density central region $\mathcal{W}_C$ (near zero) and a tail region $\mathcal{W}_T$ (outliers). In deep recommendation models, parameter sensitivity is non-uniform. Parameters in $\mathcal{W}_T$ typically represent strong activation signals or specialized preferences, inducing larger functional variations than those in the inactive central region. Mathematically, this property implies:
\begin{equation}
    |\det(J_{\Phi}(\mathbf{w}))|_{\mathbf{w} \in \mathcal{W}_T} > |\det(J_{\Phi}(\mathbf{w}))|_{\mathbf{w} \in \mathcal{W}_C}.
\end{equation}

\noindent\textbf{Proof of Coverage Expansion. }
Let $p_{\text{Gauss}}(\mathbf{w})$ denote the original distribution from linear merging, and $p_{\text{PSA}}(\mathbf{w})$ denote the transformed heavy-tailed distribution.
As derived in Appendix ~\ref{app:heavy_tail_proof}, the PSA transformation functions as a mass transport mechanism: it shifts probability mass from the low-sensitivity center $\mathcal{W}_C$ to the high-sensitivity tail $\mathcal{W}_T$.

We compare the coverage integrals $\mathcal{C}_{\text{PSA}}$ and $\mathcal{C}_{\text{Gauss}}$ by analyzing their difference:
\begin{equation}
    \Delta \mathcal{C} = \int_{\mathcal{W}} |\det(J_{\Phi}(\mathbf{w}))| \left( p_{\text{PSA}}(\mathbf{w}) - p_{\text{Gauss}}(\mathbf{w}) \right) d\mathbf{w}.
\end{equation}
Splitting the integral over the two regions:
\begin{equation}
\begin{split}
    \Delta \mathcal{C} = & \int_{\mathcal{W}_T} |\det(J_{\Phi})| \underbrace{(p_{\text{PSA}} - p_{\text{Gauss}})}_{>0} d\mathbf{w} \\
    & + \int_{\mathcal{W}_C} |\det(J_{\Phi})| \underbrace{(p_{\text{PSA}} - p_{\text{Gauss}})}_{<0} d\mathbf{w}.
\end{split}
\end{equation}
Since the total probability mass is conserved ($\int \Delta p = 0$), the mass added to the tail equals the mass removed from the center. However, because the weighting factor $|\det(J_{\Phi})|$ is strictly larger in the tail region ($\mathcal{W}_T$) than in the central region ($\mathcal{W}_C$), the positive contribution from the first integral strictly dominates the negative contribution from the second.

Therefore, we have $\Delta \mathcal{C} > 0$, which implies:
\begin{equation}
    \int_{\mathcal{W}} |\det(J_{\Phi}(\mathbf{w}))| p_{\text{PSA}}(\mathbf{w}) d\mathbf{w} > \int_{\mathcal{W}} |\det(J_{\Phi}(\mathbf{w}))| p_{\text{Gauss}}(\mathbf{w}) d\mathbf{w}.
\end{equation}
This confirms that $\mathcal{C}_{\text{PSA}} > \mathcal{C}_{\text{Gauss}}$, proving that the heavy-tailed distribution expands the effective model coverage.

%% file: sample-base.bib
@String{Computing = "Computing" }

@String{Computer = "{IEEE} Computer" }

@String{Springer = "Springer-Verlag" }

@ArtifactSoftware{R,
    title = {R: A Language and Environment for Statistical Computing},
    author = {{R Core Team}},
    organization = {R Foundation for Statistical Computing},
    address = {Vienna, Austria},
    year = {2019},
    url = {https://www.R-project.org/},
}

@article{hidasi2015session,
  title={Session-based recommendations with recurrent neural networks},
  author={Hidasi, Bal{\'a}zs and Karatzoglou, Alexandros and Baltrunas, Linas and Tikk, Domonkos},
  journal={arXiv preprint arXiv:1511.06939},
  year={2015}
}

@inproceedings{kang2018self,
  title={Self-attentive sequential recommendation},
  author={Kang, Wang-Cheng and McAuley, Julian},
  booktitle={2018 IEEE international conference on data mining (ICDM)},
  pages={197--206},
  year={2018},
  organization={IEEE}
}

@inproceedings{ma2019pi,
  title={$\pi$-net: A parallel information-sharing network for shared-account cross-domain sequential recommendations},
  author={Ma, Muyang and Ren, Pengjie and Lin, Yujie and Chen, Zhumin and Ma, Jun and Rijke, Maarten de},
  booktitle={Proceedings of the 42nd international ACM SIGIR conference on research and development in information retrieval},
  pages={685--694},
  year={2019}
}

@inproceedings{cao2022contrastive,
  title={Contrastive cross-domain sequential recommendation},
  author={Cao, Jiangxia and Cong, Xin and Sheng, Jiawei and Liu, Tingwen and Wang, Bin},
  booktitle={Proceedings of the 31st ACM International Conference on Information \& Knowledge Management},
  pages={138--147},
  year={2022}
}

@inproceedings{xu2024rethinking,
  title={Rethinking cross-domain sequential recommendation under open-world assumptions},
  author={Xu, Wujiang and Wu, Qitian and Wang, Runzhong and Ha, Mingming and Ma, Qiongxu and Chen, Linxun and Han, Bing and Yan, Junchi},
  booktitle={Proceedings of the ACM Web Conference 2024},
  pages={3173--3184},
  year={2024}
}

@inproceedings{xu2024towards,
  title={Towards open-world cross-domain sequential recommendation: A model-agnostic contrastive denoising approach},
  author={Xu, Wujiang and Ning, Xuying and Lin, Wenfang and Ha, Mingming and Ma, Qiongxu and Liang, Qianqiao and Tao, Xuewen and Chen, Linxun and Han, Bing and Luo, Minnan},
  booktitle={Joint European Conference on Machine Learning and Knowledge Discovery in Databases},
  pages={161--179},
  year={2024},
  organization={Springer}
}

@article{li2024cross,
  title={Cross-Domain Sequential Recommendation via Neural Process},
  author={Li, Haipeng and Cao, Jiangxia and Gao, Yiwen and Liu, Yunhuai and Pang, Shuchao},
  journal={arXiv preprint arXiv:2410.13588},
  year={2024}
}

@inproceedings{liu2025bridge,
  title={Bridge the domains: Large language models enhanced cross-domain sequential recommendation},
  author={Liu, Qidong and Zhao, Xiangyu and Wang, Yejing and Zhang, Zijian and Zhong, Howard and Chen, Chong and Li, Xiang and Huang, Wei and Tian, Feng},
  booktitle={Proceedings of the 48th International ACM SIGIR Conference on Research and Development in Information Retrieval},
  pages={1582--1592},
  year={2025}
}

@article{shen2024exploring,
  title={Exploring user retrieval integration towards large language models for cross-domain sequential recommendation},
  author={Shen, Tingjia and Wang, Hao and Zhang, Jiaqing and Zhao, Sirui and Li, Liangyue and Chen, Zulong and Lian, Defu and Chen, Enhong},
  journal={arXiv preprint arXiv:2406.03085},
  year={2024}
}

@article{wang2025lecdsr,
  title={LeCDSR: Large language model enhanced cross-domain sequential recommendation},
  author={Wang, Shuliang and Zhu, Jiabao and Wang, Kaibo and Ruan, Sijie},
  journal={Information Fusion},
  pages={103762},
  year={2025},
  publisher={Elsevier}
}

@article{xin2025llmcdsr,
  title={Llmcdsr: Enhancing cross-domain sequential recommendation with large language models},
  author={Xin, Haoran and Sun, Ying and Wang, Chao and Xiong, Hui},
  journal={ACM Transactions on Information Systems},
  year={2025},
  publisher={ACM New York, NY}
}

@article{cui2022m6,
  title={M6-rec: Generative pretrained language models are open-ended recommender systems},
  author={Cui, Zeyu and Ma, Jianxin and Zhou, Chang and Zhou, Jingren and Yang, Hongxia},
  journal={arXiv preprint arXiv:2205.08084},
  year={2022}
}

@inproceedings{geng2022recommendation,
  title={Recommendation as language processing (rlp): A unified pretrain, personalized prompt \& predict paradigm (p5)},
  author={Geng, Shijie and Liu, Shuchang and Fu, Zuohui and Ge, Yingqiang and Zhang, Yongfeng},
  booktitle={Proceedings of the 16th ACM conference on recommender systems},
  pages={299--315},
  year={2022}
}

@inproceedings{liu2025uncovering,
  title={Uncovering cross-domain recommendation ability of large language models},
  author={Liu, Xinyi and Wang, Ruijie and Sun, Dachun and Hakkani Tur, Dilek and Abdelzaher, Tarek},
  booktitle={Companion Proceedings of the ACM on Web Conference 2025},
  pages={2736--2743},
  year={2025}
}

@article{peng2024ecellm,
  title={ecellm: Generalizing large language models for e-commerce from large-scale, high-quality instruction data},
  author={Peng, Bo and Ling, Xinyi and Chen, Ziru and Sun, Huan and Ning, Xia},
  journal={arXiv preprint arXiv:2402.08831},
  year={2024}
}

@article{tang2025one,
  title={One model for all: Large language models are domain-agnostic recommendation systems},
  author={Tang, Zuoli and Huan, Zhaoxin and Li, Zihao and Zhang, Xiaolu and Hu, Jun and Fu, Chilin and Zhou, Jun and Zou, Lixin and Li, Chenliang},
  journal={ACM Transactions on Information Systems},
  volume={43},
  number={5},
  pages={1--27},
  year={2025},
  publisher={ACM New York, NY}
}

@article{hou2025weaverec,
  title={WeaveRec: An LLM-Based Cross-Domain Sequential Recommendation Framework with Model Merging},
  author={Hou, Min and Liu, Xin and Wu, Le and He, Chenyi and Liu, Hao and Li, Zhi and Li, Xin and Wei, Si},
  journal={arXiv preprint arXiv:2510.26546},
  year={2025}
}

@inproceedings{hadad2025x,
  title={X-Cross: Dynamic Integration of Language Models for Cross-Domain Sequential Recommendation},
  author={Hadad, Guy and Roitman, Haggai and Eshel, Yotam and Shapira, Bracha and Rokach, Lior},
  booktitle={Proceedings of the 48th International ACM SIGIR Conference on Research and Development in Information Retrieval},
  pages={1497--1507},
  year={2025}
}

@article{yang2408model,
  title={Model merging in llms, mllms, and beyond: Methods, theories, applications and opportunities, 2024},
  author={Yang, Enneng and Shen, Li and Guo, Guibing and Wang, Xingwei and Cao, Xiaochun and Zhang, Jie and Tao, Dacheng},
  journal={URL https://arxiv. org/abs/2408.07666},
  volume={2408},
  number={3}
}

@article{sun2021parallel,
  title={Parallel split-join networks for shared account cross-domain sequential recommendations},
  author={Sun, Wenchao and Ma, Muyang and Ren, Pengjie and Lin, Yujie and Chen, Zhumin and Ren, Zhaochun and Ma, Jun and De Rijke, Maarten},
  journal={IEEE Transactions on Knowledge and Data Engineering},
  volume={35},
  number={4},
  pages={4106--4123},
  year={2021},
  publisher={IEEE}
}

@inproceedings{li2021dual,
  title={Dual attentive sequential learning for cross-domain click-through rate prediction},
  author={Li, Pan and Jiang, Zhichao and Que, Maofei and Hu, Yao and Tuzhilin, Alexander},
  booktitle={Proceedings of the 27th ACM SIGKDD conference on knowledge discovery \& data mining},
  pages={3172--3180},
  year={2021}
}

@article{ma2022mixed,
  title={Mixed information flow for cross-domain sequential recommendations},
  author={Ma, Muyang and Ren, Pengjie and Chen, Zhumin and Ren, Zhaochun and Zhao, Lifan and Liu, Peiyu and Ma, Jun and de Rijke, Maarten},
  journal={ACM Transactions on Knowledge Discovery from Data (TKDD)},
  volume={16},
  number={4},
  pages={1--32},
  year={2022},
  publisher={ACM New York, NY}
}

@article{guo2021gcn,
  title={DA-GCN: A domain-aware attentive graph convolution network for shared-account cross-domain sequential recommendation},
  author={Guo, Lei and Tang, Li and Chen, Tong and Zhu, Lei and Nguyen, Quoc Viet Hung and Yin, Hongzhi},
  journal={arXiv preprint arXiv:2105.03300},
  year={2021}
}

@inproceedings{gao2023autotransfer,
  title={AutoTransfer: Instance transfer for cross-domain recommendations},
  author={Gao, Jingtong and Zhao, Xiangyu and Chen, Bo and Yan, Fan and Guo, Huifeng and Tang, Ruiming},
  booktitle={Proceedings of the 46th international ACM SIGIR conference on research and development in information retrieval},
  pages={1478--1487},
  year={2023}
}

@inproceedings{zhang2024m3oe,
  title={M3oe: Multi-domain multi-task mixture-of experts recommendation framework},
  author={Zhang, Zijian and Liu, Shuchang and Yu, Jiaao and Cai, Qingpeng and Zhao, Xiangyu and Zhang, Chunxu and Liu, Ziru and Liu, Qidong and Zhao, Hongwei and Hu, Lantao and others},
  booktitle={Proceedings of the 47th International ACM SIGIR Conference on Research and Development in Information Retrieval},
  pages={893--902},
  year={2024}
}

@inproceedings{wang2023plate,
  title={PLATE: A prompt-enhanced paradigm for multi-scenario recommendations},
  author={Wang, Yuhao and Zhao, Xiangyu and Chen, Bo and Liu, Qidong and Guo, Huifeng and Liu, Huanshuo and Wang, Yichao and Zhang, Rui and Tang, Ruiming},
  booktitle={Proceedings of the 46th International ACM SIGIR Conference on Research and Development in Information Retrieval},
  pages={1498--1507},
  year={2023}
}

@inproceedings{liu2023diffusion,
  title={Diffusion augmentation for sequential recommendation},
  author={Liu, Qidong and Yan, Fan and Zhao, Xiangyu and Du, Zhaocheng and Guo, Huifeng and Tang, Ruiming and Tian, Feng},
  booktitle={Proceedings of the 32nd ACM International conference on information and knowledge management},
  pages={1576--1586},
  year={2023}
}

@inproceedings{li2023strec,
  title={STRec: Sparse transformer for sequential recommendations},
  author={Li, Chengxi and Wang, Yejing and Liu, Qidong and Zhao, Xiangyu and Wang, Wanyu and Wang, Yiqi and Zou, Lixin and Fan, Wenqi and Li, Qing},
  booktitle={Proceedings of the 17th ACM conference on recommender systems},
  pages={101--111},
  year={2023}
}

@article{ma2024triple,
  title={Triple sequence learning for cross-domain recommendation},
  author={Ma, Haokai and Xie, Ruobing and Meng, Lei and Chen, Xin and Zhang, Xu and Lin, Leyu and Zhou, Jie},
  journal={ACM Transactions on Information Systems},
  volume={42},
  number={4},
  pages={1--29},
  year={2024},
  publisher={ACM New York, NY}
}

@inproceedings{hou2024cross,
  title={Cross-Domain LifeLong Sequential Modeling for Online Click-Through Rate Prediction},
  author={Hou, Ruijie and Yang, Zhaoyang and Ming, Yu and Lu, Hongyu and Zheng, Zhuobin and Chen, Yu and Zeng, Qinsong and Chen, Ming},
  booktitle={Proceedings of the 30th ACM SIGKDD Conference on Knowledge Discovery and Data Mining},
  pages={5116--5125},
  year={2024}
}

@inproceedings{lin2024bridging,
  title={Bridging items and language: A transition paradigm for large language model-based recommendation},
  author={Lin, Xinyu and Wang, Wenjie and Li, Yongqi and Feng, Fuli and Ng, See-Kiong and Chua, Tat-Seng},
  booktitle={Proceedings of the 30th ACM SIGKDD Conference on Knowledge Discovery and Data Mining},
  pages={1816--1826},
  year={2024}
}

@article{xu2025multi,
  title={A multi-view graph contrastive learning framework for cross-domain sequential recommendation},
  author={Xu, Zitao and Chen, Shu and Pan, Weike and Ming, Zhong},
  journal={ACM Transactions on Recommender Systems},
  volume={3},
  number={4},
  pages={1--28},
  year={2025},
  publisher={ACM New York, NY}
}

@article{zang2023contrastive,
  title={Contrastive multi-view interest learning for cross-domain sequential recommendation},
  author={Zang, Tianzi and Zhu, Yanmin and Zhang, Ruohan and Wang, Chunyang and Wang, Ke and Yu, Jiadi},
  journal={ACM Transactions on Information Systems},
  volume={42},
  number={3},
  pages={1--30},
  year={2023},
  publisher={ACM New York, NY}
}

@article{chen2024survey,
  title={A survey on cross-domain sequential recommendation},
  author={Chen, Shu and Xu, Zitao and Pan, Weike and Yang, Qiang and Ming, Zhong},
  journal={arXiv preprint arXiv:2401.04971},
  year={2024}
}

@inproceedings{zheng2025collabedit,
  title={CollabEdit: Towards Non-destructive Collaborative Knowledge Editing},
  author={Zheng, Jiamu and Zhang, Jinghuai and Du, Tianyu and Zhang, Xuhong and Yin, Jianwei and Lin, Tao},
  booktitle={International Conference on Learning Representations},
  volume={2025},
  pages={29154--29174},
  year={2025}
}

@inproceedings{xu-etal-2025-videoeraser,
    title = "{V}ideo{E}raser: Concept Erasure in Text-to-Video Diffusion Models",
    author = "Xu, Naen  and
      Zhang, Jinghuai  and
      Li, Changjiang  and
      Chen, Zhi  and
      Zhou, Chunyi  and
      Li, Qingming  and
      Du, Tianyu  and
      Ji, Shouling",
    booktitle = "Proceedings of the 2025 Conference on Empirical Methods in Natural Language Processing",
    year = "2025",
    pages = "5954--5983",
}

@article{feng2026generalized,
  title={Generalized Category Discovery under Domain Shift: A Frequency Domain Perspective},
  author={Feng, Wei and Ge, Zongyuan},
  journal={Advances in Neural Information Processing Systems},
  volume={38},
  pages={111721--111749},
  year={2025}
}

@ARTICLE{11494811,
  author={Dong, Jiabao and Yang, Lingyuan and Fang, Pengji and Li, Shixiang and Kong, Yusheng and Ren, Lei},
  journal={IEEE Transactions on Automation Science and Engineering}, 
  title={IEI-TIA: Industrial Embodied Intelligence Trustworthy Interpretable Agent for Robotic Long-Horizon and Repetitive Tasks}, 
  year={2026},
  volume={23},
  number={},
  pages={9211-9222},
}

@misc{TEMA,
      title={TEMA: Anchor the Image, Follow the Text for Multi-Modification Composed Image Retrieval}, 
      author={Zixu Li and Yupeng Hu and Zhiheng Fu and Zhiwei Chen and Yongqi Li and Liqiang Nie},
      year={2026},
      eprint={2604.21806},
      archivePrefix={arXiv},
      primaryClass={cs.CV}, 
}

@inproceedings{HABIT,
  title={HABIT: Chrono-Synergia Robust Progressive Learning Framework for Composed Image Retrieval},
  author={Li, Zixu and Hu, Yupeng and Chen, Zhiwei and Zhang, Shiqi and Huang, Qinlei and Fu, Zhiheng and Wei, Yinwei},
  booktitle={Proceedings of the AAAI Conference on Artificial Intelligence},
  volume={40},
  number={8},
  pages={6762--6770},
  year={2026}
}

@InProceedings{pmlr-v267-tan25f,
  title =   {{WM}ark{GPT}: Watermarked Image Understanding via Multimodal Large Language Models},
  author =       {Tan, Songbai and Qiu, Xuerui and Shu, Yao and Xu, Gang and Xu, Linrui and Xu, Xiangyu and Zhuang, Huiping and Li, Ming and Yu, Fei},
  booktitle =   {Proceedings of the 42nd International Conference on Machine Learning},
  pages =   {58621--58636},
  year =   {2025},
   volume =   {267},
  series =   {Proceedings of Machine Learning Research},
  month =   {13--19 Jul},
  publisher =    {PMLR},
  }

@article{gong2020hamming,
  title={Hamming embedding sensitivity guided fusion network for 3D shape representation},
  author={Gong, Biao and Yan, Chenggang and Bai, Junjie and Zou, Changqing and Gao, Yue},
  journal={IEEE Transactions on Image Processing},
  volume={29},
  pages={8381--8390},
  year={2020},
  publisher={IEEE}
}

@inproceedings{fu2025objectrelator,
  title={Objectrelator: Enabling cross-view object relation understanding across ego-centric and exo-centric perspectives},
  author={Fu, Yuqian and Wang, Runze and Ren, Bin and Sun, Guolei and Gong, Biao and Fu, Yanwei and Paudel, Danda Pani and Huang, Xuanjing and Van Gool, Luc},
  booktitle={Proceedings of the IEEE/CVF International Conference on Computer Vision},
  pages={6530--6540},
  year={2025}
}

@inproceedings{tan2025mimir,
  title={Mimir: Improving video diffusion models for precise text understanding},
  author={Tan, Shuai and Gong, Biao and Feng, Yutong and Zheng, Kecheng and Zheng, Dandan and Shi, Shuwei and Shen, Yujun and Chen, Jingdong and Yang, Ming},
  booktitle={Proceedings of the Computer Vision and Pattern Recognition Conference},
  pages={23978--23988},
  year={2025}
}

@inproceedings{tananimate,
  title={Animate-X: Universal Character Image Animation with Enhanced Motion Representation},
  author={Tan, Shuai and Gong, Biao and Wang, Xiang and Zhang, Shiwei and Zheng, DanDan and Zheng, Ruobing and Zheng, Kecheng and Chen, Jingdong and Yang, Ming},
  booktitle={The Thirteenth International Conference on Learning Representations},
  year={2025}
}

@inproceedings{huang2024troika,
  title={Troika: Multi-path cross-modal traction for compositional zero-shot learning},
  author={Huang, Siteng and Gong, Biao and Feng, Yutong and Zhang, Min and Lv, Yiliang and Wang, Donglin},
  booktitle={Proceedings of the IEEE/CVF Conference on Computer Vision and Pattern Recognition},
  pages={24005--24014},
  year={2024}
}

@inproceedings{du2021cert,
  title={Cert-rnn: Towards certifying the robustness of recurrent neural networks},
  author={Du, Tianyu and Ji, Shouling and Shen, Lujia and Zhang, Yao and Li, Jinfeng and Shi, Jie and Fang, Chengfang and Yin, Jianwei and Beyah, Raheem and Wang, Ting},
  booktitle={Proceedings of the 2021 ACM SIGSAC Conference on Computer and Communications Security},
  year={2021}
}

@article{lee2025mitigating,
  title={Mitigating parameter interference in model merging via sharpness-aware fine-tuning},
  author={Lee, Yeoreum and Jung, Jinwook and Baik, Sungyong},
  journal={arXiv preprint arXiv:2504.14662},
  year={2025}
}

@article{wang2025more,
  title={Why do more experts fail? a theoretical analysis of model merging},
  author={Wang, Zijing and Xu, Xingle and Liu, Yongkang and Zhang, Yiqun and Lin, Peiqin and Feng, Shi and Yang, Xiaocui and Wang, Daling and Sch{\"u}tze, Hinrich},
  journal={arXiv preprint arXiv:2505.21226},
  year={2025}
}

@inproceedings{park2024pacer,
  title={Pacer and runner: Cooperative learning framework between single-and cross-domain sequential recommendation},
  author={Park, Chung and Kim, Taesan and Yoon, Hyungjun and Hong, Junui and Yu, Yelim and Cho, Mincheol and Choi, Minsung and Choo, Jaegul},
  booktitle={Proceedings of the 47th International ACM SIGIR Conference on Research and Development in Information Retrieval},
  pages={2071--2080},
  year={2024}
}
